\newcommand{\dup}{\text{d}}
\newcommand{\1}{\text{\usefont{U}{bbold}{m}{n}1}}
\newcommand{\Hil}{\mathcal{H}}
\newcommand{\tp}{\otimes}
\newcommand{\N}{\mathbb{N}}
\newcommand{\R}{\mathbb{R}}
\newcommand{\C}{\mathbb{C}}
\newcommand{\ran}{\hspace{-0pt}\right\rangle}
\newcommand{\lan}{\left\langle\hspace{-0pt}}
\newcommand{\relu}{\textsc{R}}
\newcommand{\States}{\mathcal{S}}
\theoremstyle{plain}
\newtheorem{theorem}{Theorem}
\newtheorem{conjecture}{Conjecture}
\newtheorem{corollary}{Corollary}
\newtheorem{definition}{Definition}
\newtheorem{question}{Open Question}
\begin{document}


\title{On the Dynamics of Local Hidden-Variable Models}

\author{Nick von Selzam}
\email{nick.von-selzam@mpl.mpg.de}
\affiliation{Max Planck Institute for the Science of Light, 91058 Erlangen, Germany}

\author{Florian Marquardt}
\email{florian.marquardt@mpl.mpg.de}
\affiliation{Max Planck Institute for the Science of Light, 91058 Erlangen, Germany}
\affiliation{Department of Physics, Friedrich-Alexander-Universität Erlangen-Nürnberg, 91058 Erlangen, Germany}

\date{\today}

\begin{abstract}

Bell nonlocality is an intriguing property of quantum mechanics with far reaching consequences for information processing, philosophy and our fundamental understanding of nature. However, nonlocality is a statement about static correlations only. It does not take into account dynamics, i.e.~time evolution of those correlations. Consider a dynamic situation where the correlations remain local for all times. Then at each moment in time there exists a local hidden-variable (LHV) model reproducing the momentary correlations. Can the time evolution of the correlations then be captured by evolving the hidden variables? In this light, we define dynamical LHV models and motivate and discuss potential additional physical and mathematical assumptions. Based on a simple counter example we conjecture that such LHV dynamics does not always exist. This is further substantiated by a rigorous no-go theorem. Our results suggest a new type of nonlocality that can be deduced from the observed time evolution of measurement statistics and which generically occurs in interacting quantum systems. 

\end{abstract}

\maketitle

\section{Introduction}

The original idea of LHV models is that there exist some fundamental microscopic degrees of freedom, the so-called hidden variables~$\lambda$, that are supposed to provide a more informative description of the physical state of a system compared to the quantum state by locally determining the results of arbitrary measurements. Of course, Bell's theorem implies that such a description cannot exist for all quantum states \cite{Bell_1964}. However, many states \emph{are} local and can be described in this manner. In such situations, an LHV model may be viewed as a potential genuine description of the microscopic physics. However, a complete microscopic model must also include the dynamics. Consider a situation where some initial local quantum states evolve but remain local for all times. Then, at each instant there exist LHV models that reproduce the momentary measurement statistics (henceforth referred to as \lq correlations\rq). However, the LHV models corresponding to different times or states may be entirely distinct, for example, featuring different hidden-variable spaces. A more desirable scenario involves only a single hidden-variable space. Then, the time evolution of correlations should emerge from some evolution of the hidden variables (see Fig.~\ref{fig: 1}). The question arises: Can the quantum dynamics of local correlations be captured by evolving the hidden variables of an underlying LHV model?

Suppose this was possible. Then we have a \lq fully local\rq\ description of the quantum dynamics. This, for example, could allow for efficient simulations of noisy quantum systems. On the other hand, suppose this was generally not possible. Then, we have identified a new form of nonlocality linked to quantum dynamics, which is conceptually different from temporal Bell inequalities \cite{Leggett_Garg_1985} and the \lq dynamical nonlocality\rq\ interpretation of the Aharonov-Bohm effect \cite{Popescu_2010, Aharonov_1959}. From the view of Bohmian mechanics \cite{Bohm_1952_1, Bohm_1952_2}, where a hidden-variable model for all quantum states requires nonlocal time evolution, we ask whether restricting to local states does allow for local time evolution. 

\begin{figure}[t]
    \centering
    \includegraphics[width=0.45\textwidth]{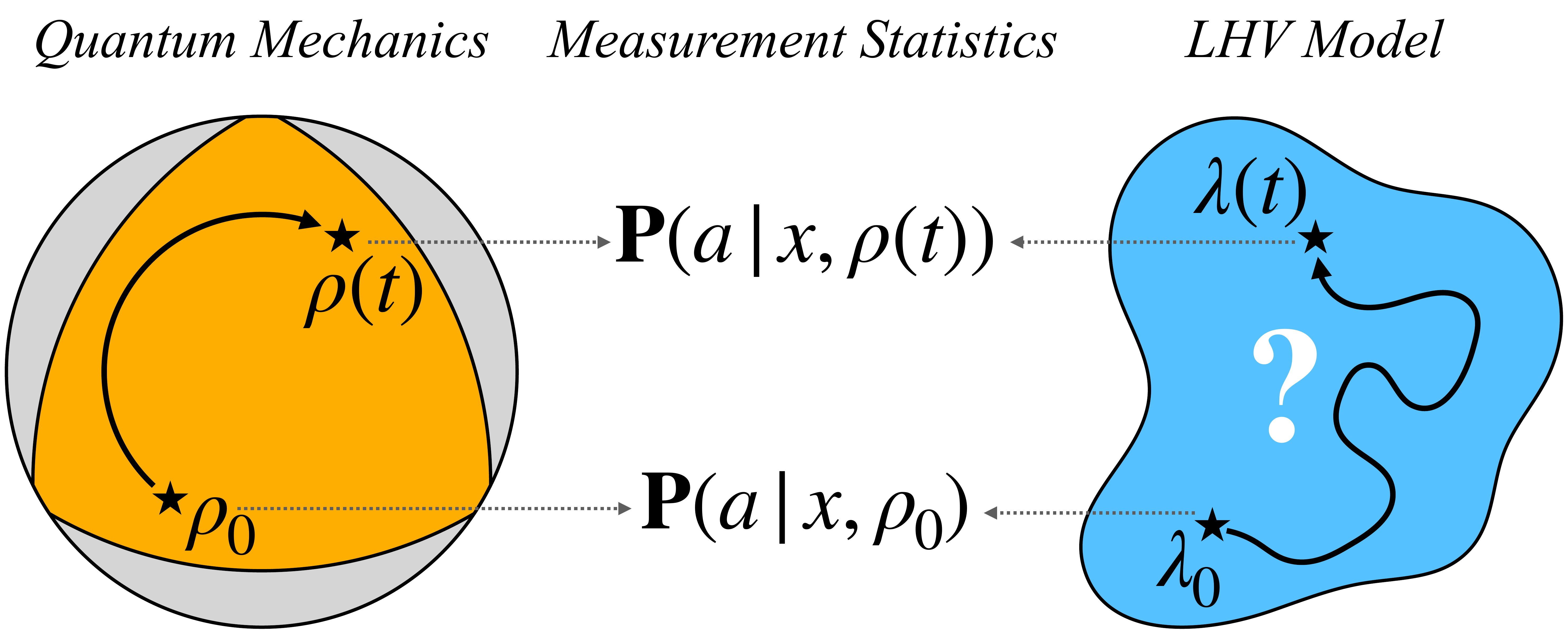}
    \caption{ 
    \textbf{LHV dynamics.} The grey disc represents all quantum states and the orange subset represents the local states. Suppose we have some local quantum states which remain local during time evolution. Then, the quantum measurement statistics at each moment in time, that is, the probabilities~$\mathbf{P}(a|x, \rho(t))$ to observe outcome~$a$ upon measuring~$x$, can be reproduced by LHV models. Can the time-evolution of the measurement statistics be captured by evolving the hidden variables?}
    \label{fig: 1}
\end{figure}

To address these questions, we define \emph{dynamical LHV models} and motivate a range of possible additional physical and mathematical properties. For this purpose, we also formally introduce \lq LHV models for sets of states\rq\ as a natural generalization of LHV models for individual, fixed quantum states.
We find that dynamical LHV models with many of the possible additional properties do exist for the case of noninteracting quantum dynamics. However, a particular, simple LHV model, for local two-qubit states subject to time evolution governed by the Heisenberg interaction, turns out to be incompatible with LHV dynamics. This leads us to conjecture that LHV dynamics does not always exist for interacting quantum dynamics. We further substantiate this claim with a rigorous no-go theorem that excludes LHV dynamics for a sufficiently large number of particles. We discuss the necessary assumptions and suggest potential generalizations. For context, we compare and distinguish our setting from existing notions such as \lq temporal\rq\ or \lq dynamical\rq\ nonlocality.

\section{LHV Models for Sets of States}

A quantum state is said to be local with respect to some set of measurements if its measurement correlations can be explained by an LHV model. An LHV model can be defined by a single-particle hidden-variable space~$\Lambda_1$, a joint hidden-variable distribution~$p$ on the~$N$-particle hidden-variable space~$\Lambda = \Lambda_1^N$ and a local measurement rule~$q$ (see e.g.~Ref.~\cite{von_Selzam_Marquardt_2025} for this particular formulation based on hidden variables assigned to individual particles). Note that when we speak of \lq particles\rq\ in the following, we are referring to individual \lq parties\rq\ in the sense quantum communication (which still can be composite objects), and measurements and hidden variables are local to these parties. In this work, we want to think of LHV models not just in terms of this mathematical definition but as genuine physical models. Imagine the following generic situation: We can prepare any number of some particular type of particles in many different ways (states) and then measure them individually, also in many different ways. Suppose, the observed correlations for any one of those preparation procedures are local. Then, in principle, it could still be that the LHV models for two different states are wildly different, including the hidden-variable spaces. However, we argue that the single-particle physics should not depend on the preparation procedure or the total number of particles. That is, there should be a fixed single-particle hidden-variable space~$\Lambda_1$, that describes the possible microscopic \lq states\rq\ (hidden variables)~$\lambda$ of single particles, and a fixed local measurement rule~$q$, that determines the local measurement outcomes when measuring individual particles. Both of these may depend on the type of particles but not on the preparation procedure, i.e., the (global) quantum state. The latter should be encoded in the hidden-variable distribution only. In this spirit, we define \lq LHV base models\rq\ and \lq LHV models for sets of states\rq\ below. While these particular considerations are probably not completely new, we explicitly include them here as they are required for defining \emph{dynamical} LHV models in the next section.

LHV models can be defined independently of any particular quantum states by only specifying a single-particle hidden-variable space and a local measurement rule:
\begin{definition}
    An \textbf{LHV base model} for a set of single-particle measurements~$\mathcal{M}_1$ is specified by a single-particle hidden-variable space~$\Lambda_1$ and a local measurement rule~$q$ which assigns to each outcome~$a\in\mathcal{O}_1$ a probability~$q(a|x, \lambda)$ for observing that outcome conditioned on the measurement~$x\in\mathcal{M}_1$ and the hidden-variable~$\lambda\in\Lambda_1$.
\end{definition}
Given an LHV base model~$(\Lambda_1, q)$ for measurements~$\mathcal{M}_1$ one can obtain many different LHV models by specifying a number of particles~$N$ and a joint hidden-variable distribution~$p$ on the~$N$-particle hidden-variable space~$\Lambda = \Lambda_1^N$. Then, the measurement correlations, that is, the probabilities~$\mathbf{P}(a|x)$ to obtain outcomes~$a = (a_1,\ldots,a_N)\in \mathcal{O}_1^N$ upon measuring~$x = (x_1,\ldots,x_N)\in\mathcal{M}_1^N$ are given by
\begin{align}
    \mathbf{P}(a|x) = \int_{\Lambda}\dup \lambda\, p(\lambda)\,\prod_{k=1}^N q(a_k|x_k, \lambda_k).
\end{align}
For notational simplicity, we introduce the space of \emph{measurement events}~$\mathbf{M}$ combining measurements and outcomes,~$\mathbf{M}_1 = \mathcal{O}_1\times \mathcal{M}_1$ and~$\mathbf{M} = \mathbf{M}_1^N$. In the following, we will abbreviate the product of local measurement rules by~$Q_m(\lambda) = \prod_{k=1}^N q(m_k|\lambda_k)$ (for~$m=(m_1,\ldots,m_N)\in\mathbf{M}$) such that local correlations can be expressed as
\begin{align}
    \mathbf{P}(m) = \lan Q_m(\lambda)\ran_{\lambda \sim p(\lambda)}.
\end{align}

We will stick to systems of identical particles, such as~$N$ spins of the same type. Therefore, only one single-particle hidden-variable space~$\Lambda_1$ is required; this can be generalized easily. The single-particle hidden-variable space~$\Lambda_1$ must be a measurable space such that we can talk about hidden-variable distributions. Additionally, we also require throughout that~$\Lambda_1$ is a smooth manifold. This becomes important later as our definition of LHV dynamics will require a differentiable structure.

On the quantum side, we consider an~$N$-qudit Hilbert space~$\Hil = (\Hil_1)^{\tp N}$, $\Hil_1 = \C^{D}$, and a set of single-particle measurements~$\mathcal{M}_1$, for example all projective or all POVM measurements. The corresponding set of local~$N$-qudit measurements~$\mathcal{M}\cong \mathcal{M}_1^N$ is given by~$N$-fold tensor products of single-particle measurements. Quantum mechanics describes a single-particle measurement event~$m_k\in\mathbf{M}_1$ by a positive-semidefinite operator~$M(m_k)$ and a measurement event~$m\in\mathbf{M}$ by the tensor product~$M_m = \bigotimes_{k=1}^N M(m_k)$. For example, for projective measurements the measurement operators~$M(m_k)$ are the eigenprojections of the hermitian operator corresponding to the measured observable.

We denote by~$\mathcal{L}_\mathcal{M}\subseteq \mathbb{D}(\Hil)$ the subset of states (density matrices) that are local with respect to the measurements in~$\mathcal{M}$. By definition, the measurement statistics of any state~$\rho\in\mathcal{L}_\mathcal{M}$ can be reproduced by an LHV model. For a subset of local states~$\States\subseteq \mathcal{L}_\mathcal{M}$ we want the potentially different LHV models to be compatible in the following sense: 
\begin{definition}
    An \textbf{LHV model for a set of states}~$\States$ and measurements~$\mathcal{M}\cong \mathcal{M}_1^N$ consists of an LHV base model~$(\Lambda_1, q)$ for the single-particle measurements~$\mathcal{M}_1$ together with \lq valid\rq\ hidden-variable distributions~$p_\rho:\Lambda \to [0, \infty)$ on the~$N$-particle hidden-variable space~$\Lambda = \Lambda_1^N$ for every state~$\rho\in\States$. That is, the probabilities~$\mathbf{P}(m|\rho) = \Tr(\rho M_m)$ for measurement events~$m=(m_1,\ldots,m_N)\in\mathbf{M}$ are given by~$\mathbf{P}(m|\rho) = \lan Q_m(\lambda)\ran_{\lambda \sim p_\rho}$. 
\end{definition}
For sets of states~$\States$ that contain only a single state,~$\States = \{\rho\}$, the above reduces to the standard notion of \lq an LHV model for the quantum state~$\rho$\rq. 

One may ask whether it is a restriction to consider such LHV models for sets of states. For a finite number~$K\in\N$ of local measurement settings ($\mathcal{M}_1 \cong \{1,\ldots,K\}$) and a finite set~$\mathcal{O}_1$ of possible measurement results our requirements are not a restriction. Indeed, the standard formulation of the local polytope is already of the desired form (see e.g.~\cite{Brunner_2014}): The single-particle hidden variables are lists of measurement results for all possible local measurements,~$\Lambda_1 = \mathcal{O}_1^K$, and the local measurement rule selects the specified outcome,~$q(a|x, \lambda)=\delta_{a, \lambda_{x}}$ for~$a\in\mathcal{O}_1,x\in \{1,\ldots,K\}$ and~$\lambda = (\lambda_1,\ldots,\lambda_K)\in\mathcal{O}_1^K$. For continuous sets of measurements, such as all projective or all POVM measurements, every LHV base model, given by a single-particle hidden-variable space and local measurement rule, yields an LHV model for all the states whose correlations can be captured by \emph{some} hidden-variable distribution. However, in general, this does not yet capture all local states, because the given hidden-variable space and measurement rule may not be \lq expressive\rq\ enough. We demonstrate in \hyperref[app A: general lhv models]{App.~A} how to construct finite-dimensional hidden-variable spaces and measurement rules that can be made arbitrarily expressive in a systematic manner, see also our previous work \cite{von_Selzam_Marquardt_2025}. If one accepts an infinite-dimensional hidden-variable space we obtain an LHV model for all local states.

\section{LHV Dynamics}

We now introduce what exactly we mean by \lq LHV dynamics\rq. We distinguish two different cases on the quantum side: time evolution with respect to a fixed Hamiltonian and transformations under some group of unitaries.

We start with the situation of a fixed, potentially time-dependent \textbf{Hamiltonian}~$H(t)$. That is, on the quantum side we have the time evolution~$\rho_0 \mapsto \rho(t)$ for some set of local initial states~$\rho_0\in\States\subseteq\mathcal{L}_\mathcal{M}$. This leads to time evolution of the measurement statistics~$\mathbf{P}(m|\rho_0, t) = \mathbf{P}(m|\rho(t))$. If we require that the time evolved states remain local for all times,~$\rho(t)\in \mathcal{L}_\mathcal{M}$, and furthermore that we have an LHV model for all time-evolved states~$\{\rho(t)| t \in \R, \rho_0 \in \States\} \subseteq \mathcal{L}_\mathcal{M}$, then we immediately obtain time-dependent hidden-variable distributions~$p_{\rho(t)}$ reproducing the time-evolved correlations
\begin{align}
    \mathbf{P}(m|\rho(t)) = \lan Q_m(\lambda)\ran_{\lambda\sim p_{\rho(t)}}.
\end{align}
However, what we actually desire is \emph{dynamics of the hidden variables}. The idea is the following: In any instance of an experiment a particular value of~$\lambda\in\Lambda$ occurs and then evolves in time. The time evolution of the distribution~$p_\rho(\lambda)$ is merely our coarse-grained view when averaging over multiple runs of the experiment. This leads to the following definition.

\begin{figure}[t]
    \centering
    \includegraphics[width=0.48\textwidth]{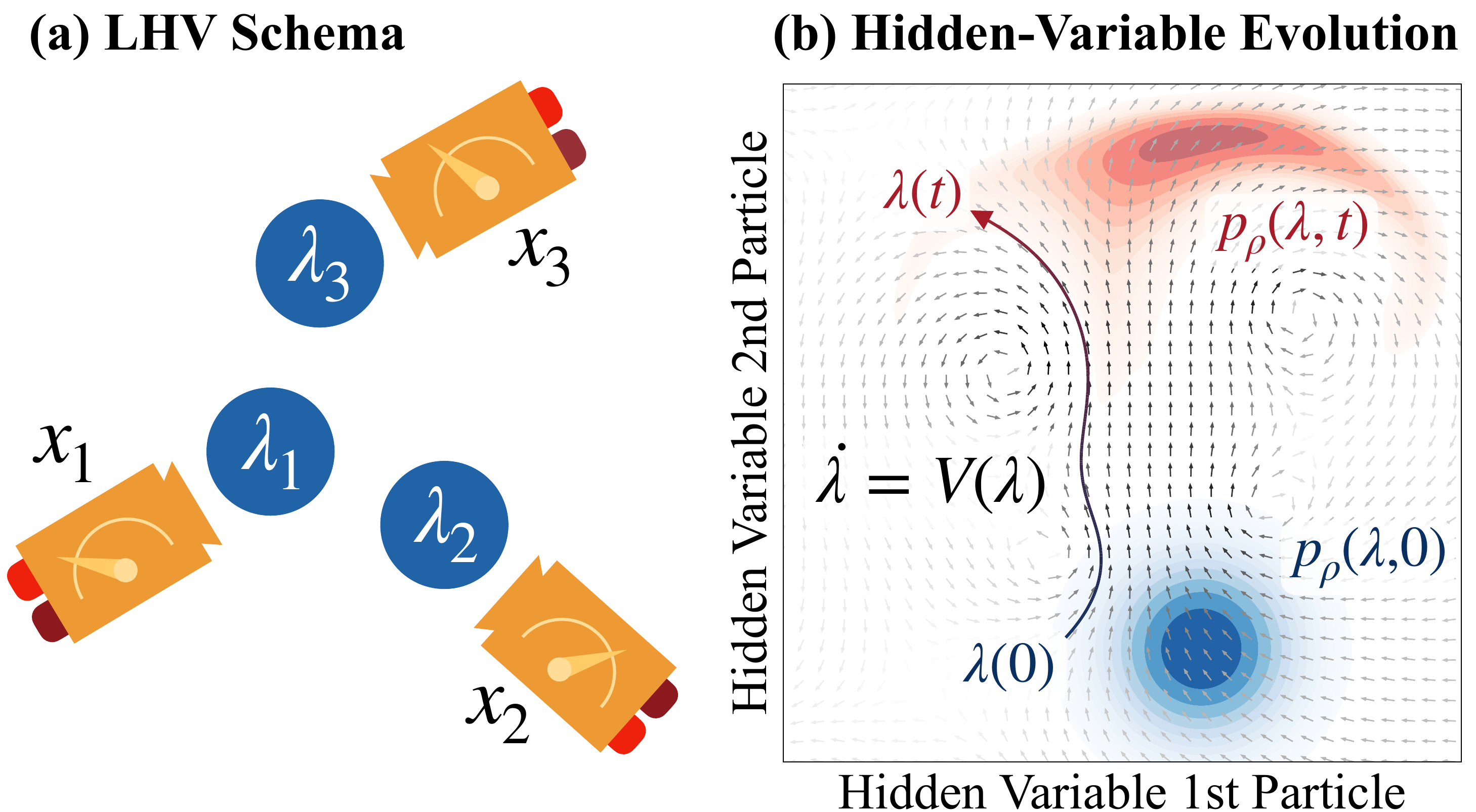}
    \caption{ 
    (a)~Assuming a local hidden-variable model as the fundamental microscopic description for local states that remain local under time evolution, the physics of the hidden variables cannot depend on the quantum state. The possible single-particle hidden variables~$\lambda_j\in\Lambda_1$ and the way they determine the measurement outcome~$a_j$ under a local measurement~$x_j$ are independent of the quantum state~$\rho$ and the number of particles~$N$.
    (b)~A given time-independent Hamiltonian~$H$ should correspond to a time- and state-independent velocity field~$V$ on the hidden-variable level. A particular instance of the hidden variables~$\lambda$ evolves according to this velocity field. For every quantum state~$\rho$ this leads to time evolution of the hidden-variable distribution~$p_\rho$ according to the continuity equation.}
    \label{fig: 2}
\end{figure}

\begin{definition}
    A~\textbf{dynamical LHV model} for a Hamiltonian~$H(t)$, a set of states~$\States$ and a set of measurements~$\mathcal{M}$ is given by an LHV model~$(\Lambda_1, q, \{ p_\rho\ | \rho\in\States\})$ for~$\States$ and~$\mathcal{M}$ together with a velocity field~$V(\lambda, t)$ on the hidden-variable space~$\Lambda$ such that the solutions~$\lambda(t)$ of the initial value problem
    \begin{align}
        \frac{\dup}{\dup t} \lambda(t) = V(\lambda(t), t),\quad \lambda(0) = \lambda_0,
    \end{align}
    lead to the correct time evolution of measurement statistics: For all states~$\rho_0\in\States $, measurement events~$ m \in \mathbf{M}$ and times~$t$
    \begin{align}
        \mathbf{P}(m|\rho(t)) \stackrel{!}{=} \lan Q_m(\lambda(t))\ran_{\lambda_0 \sim p_{\rho_0}}.
    \end{align}
    For a time-independent Hamiltonian~$H$ we require a time-independent velocity field~$V(\lambda)$.
    A Hamiltonian~$H(t)$ \textbf{has LHV dynamics} for a set of states~$\States$ and measurements~$\mathcal{M}$ if there exists a dynamical LHV model for~$H(t),\, \States,\, \mathcal{M}$.
\end{definition}

The velocity field~$V$ encodes the dynamics of the hidden variables. Crucially, the definition above says that there is only one \textbf{state-independent} velocity field~$V$. A dependence on the quantum state~$\rho$ would be a kind of global information, which is unavailable: A particular instance~$\lambda\in\Lambda$ of the hidden variables \lq does not know\rq\ about the quantum state~$\rho$ or the distribution~$p_\rho$ it belongs to. In fact, the same~$\lambda$ may appear for different quantum states. \lq State-dependent LHV dynamics\rq\ would allow for a different velocity field for every quantum state~$\rho\in\States$ and under some regularity assumptions such state-dependent LHV dynamics generically exists. We are interested in the much more interesting situation of state-independent LHV dynamics.

For time-independent Hamiltonians~$H$ there is no external drive and time evolution is governed by fixed interactions between the particles. Hence, in this case, we require the same on the hidden-variable level, i.e., a time-independent velocity field~$V(\lambda)$.

We can also express the condition for LHV dynamics in terms of the hidden-variable distributions. Denote by~$T_{t, t_0}$ the flow induced by the velocity field~$V$, that is~$\lambda(t) = T_{t, 0}(\lambda_0)$. Then, we can perform a change of variables, in the language of hydrodynamics essentially switching from the Lagrangian to the Eulerian picture,
\begin{align}
    \lan Q_m(T_{t, 0}(\lambda_0))\ran_{\lambda_0 \sim p_{\rho_0}(\lambda_0)} = \lan Q_m(\lambda)\ran_{\lambda \sim p_{\rho_0}(\lambda, t)}.
\end{align}
Here,~$p_{\rho_0}(\lambda, t)=(T_{t, 0} p_{\rho_0})(\lambda)$ is the distribution~$p_{\rho_0}$ \lq pushed forward\rq\ by the map~$T_{t, 0}$. That is,~$p_{\rho_0}(\lambda, t)$ solves the continuity equation for the velocity field~$V$
\begin{align}
    \partial_t p_{\rho_0}(\lambda, t) + \text{div}_\lambda(p_{\rho_0}(\lambda, t)V(\lambda, t)) = 0.
\end{align}
The condition for LHV dynamics becomes that there exists a velocity field such that the solutions~$p_{\rho_0}(\lambda, t)$ of its associated continuity equation, with initial conditions~$p_{\rho_0}(\lambda, 0) = p_{\rho_0}(\lambda)$ for~$\rho_0\in\States$, are valid hidden-variable distributions for the time-evolved states~$\rho(t)$ (see~Fig.~\ref{fig: 2}). This leads to a complementary point of view captured by the following definition.
\begin{definition}
    Let~$(\Lambda_1, q)$ be an LHV base model for measurements~$\mathcal{M}$. Let~$H(t)$ be a Hamiltonian and let~$\States$ be a set of local states. Then, time-dependent distributions~$p_{\rho_0}(\lambda, t)$, which are valid hidden-variable distributions for~$\rho(t)$ (where~$\rho(0)=\rho_0\in\States$), are \textbf{compatible with LHV dynamics} for~$H(t),\, \States,\, \mathcal{M}$, if there exists a velocity field~$V(\lambda, t)$ for which all the time-dependent distributions~$\{(\lambda, t) \mapsto p_{\rho_0}(\lambda, t) | \rho_0\in\States\}$ solve the continuity equation. 
\end{definition}

We remark that given an LHV model for some set of states,~$(\Lambda_1, q, \{p_\rho | \rho\in\States\})$, the hidden-variable distribution~$p_\rho$ for any given state~$\rho\in\States$ is in general not unique. That is, there may be many (even infinitely many) different distributions that all lead to the same measurement statistics. This is a kind of \textbf{gauge freedom} as there is no experiment that can distinguish between two such distributions. For example, even given some fixed initial hidden-variable distributions, their time evolution is not already uniquely fixed by the quantum evolution of measurement statistics alone. Additionally, also the initial distributions are not unique. We say, \lq an LHV base model~$(\Lambda_1, q)$ \emph{has LHV dynamics}\rq\ if there exist hidden-variable distributions on~$\Lambda$ (that is, \emph{some} choice of gauge) and a velocity field yielding a dynamical LHV model.

To summarize, we have the following hierarchy of implications (all for fixed measurements~$\mathcal{M}$):
\begin{center}
    \fbox{
        \parbox{0.85\linewidth}{ \centering
    Given an LHV base model~$(\Lambda_1, q)$, the distributions $\{p_{\rho_0}(\cdot, t) | \rho_0 \in \States, t\in\R \}$ are compatible with LHV dynamics for~$H(t)$ and~$\States$.}} \\
    $\Downarrow $ \\
    \fbox{
        \parbox{0.75\linewidth}{ \centering
    There exists a velocity field~$V(\lambda, t)$ such that~$(\Lambda_1, q, \{p_\rho(\cdot, 0) | \rho\in\States\}, V )$ is a dynamical LHV model for~$H(t)$ and~$\States$.}} \\
    $\Downarrow$ \\
    \fbox{
        \parbox{0.65\linewidth}{ \centering
    The LHV base model~$(\Lambda_1, q)$ has LHV dynamics for~$H(t)$ and~$\States$.}} \\
    $\Downarrow$ \\
    \fbox{
        \parbox{0.55\linewidth}{ \centering
    $H(t)$ has LHV dynamics for~$\States$.}}
\end{center}

Next, we consider the more general situation of \textbf{unitary dynamics}. 
This allows for a unified description of LHV dynamics in contexts such as quantum control or quantum computing, where one may consider evolution under many different Hamiltonians, under arbitrary control pulse sequences or even discrete sets of gates. 

On the quantum side we have the action of a subgroup~$G$ of the unitary group~$\text{U}(\Hil)$ on states by conjugation,~$\rho \mapsto U[\rho] \equiv  U\rho U^\dagger,\ U \in \text{U}(\Hil)$. Modifying a unitary by any global phase is physically irrelevant. So, to be precise, the formula above defines an action of the projective unitary group~$\text{PU}(\Hil)$. However, for simplicity we will keep speaking of \lq unitaries~$U$\rq, only implicitly referring to the equivalence classes~$[U]$ in~$\text{PU}(\Hil)$.

Now, the idea is that the action of each unitary should correspond to some transformation of the hidden variables. In general, we also want to apply several unitaries successively.
\begin{definition}
    A \textbf{dynamical LHV model} for a group of unitaries~$G\subseteq \text{U}(\Hil)$, a set of states~$\States$ and a set of measurements~$\mathcal{M}$ is given by an LHV model~$(\Lambda_1, q, \{p_\rho | \rho\in\States\})$ for~$\States$ and~$\mathcal{M}$ together with transformations~$T_U: \Lambda \to \Lambda$ of the hidden variables for all unitaries~$U\in G$ such that the measurement statistics transform correctly: For all states~$\rho\in\States$ and measurement events~$ m\in\mathbf{M}$
    \begin{align}
        \textbf{P}(m|U[\rho]) 
        &\stackrel{!}{=} \lan Q_m(T_U(\lambda))\ran_{\lambda \sim p_\rho} = \lan Q_m(\lambda)\ran_{\lambda \sim T_U p_\rho}.
    \end{align}
    Moreover, sequentially applying the hidden-variable transformations for a collection of unitaries~$U_1,\ldots,U_K\in G$ must lead to the correct overall transformation of measurement statistics: If~$U = U_K\cdots U_1$, then
    \begin{align}
        P(m|U[\rho]) &\stackrel{!}{=} \lan Q_m(T_{U_K}\circ \cdots \circ T_{U_1}(\lambda))\ran_{\lambda \sim p_\rho}.
    \end{align}
    A group of unitaries~$G$ \textbf{has LHV dynamics} for a set of states~$\States$ and measurements~$\mathcal{M}$ if there exists a dynamical LHV model for~$G,\, \States,\, \mathcal{M}$.
\end{definition}
Analogous to the case of a single Hamiltonian, the transformations~$T_U$ cannot depend on the state~$\rho$, they are \textbf{state-independent}. 

For a dynamical LHV model we can require, without loss of generality, that the set of states~$\States$ is invariant under the action of unitaries in~$G$: If there was a state~$\rho\in\States$ and a unitary~$U\in G$ such that~$U[\rho] \notin \States$, then we may simply add~$U[\rho]$ to the set of states~$\States$. Because the measurement statistics transform correctly, we have the valid hidden-variable distribution~$T_U p_\rho$ for~$U[\rho]$. And because the measurement statistics for~$\rho$ transform correctly under sequences of unitaries, the same holds for~$U[\rho]$. The set of states~$\States$ being invariant under the action of~$G$ is a notable constraint (cf.~Fig.~\ref{fig: 3}). For example, for the full unitary group,~$G = \text{U}(\Hil)$, it immediately excludes all pure product states: These are local with respect to any measurements but can be mapped to pure entangled, and therefore non-local states by certain unitaries. However, there are subsets of the set of all local states that have the required property, even independently of the considered measurements~$\mathcal{M}$. Specifically, let~$\rho_0 = \1/D^N$ be the maximally mixed state. Then, the \lq ball of noisy states\rq
\begin{align}
    \States_v = \{v \rho + (1-v) \rho_0 \, |\, \rho \in \mathbb{D}(\Hil) \} \subseteq \mathbb{D}(\Hil)
\end{align}
has non-empty interior (\lq full dimension\rq), is invariant under conjugation by arbitrary unitaries and contains only separable states for sufficiently small visibilities~$v>0$~\cite{Rungta_et_al_2001}. 

This concludes our fundamental definitions of LHV dynamics. In the following, we introduce additional properties that one may or may not demand LHV dynamics to have. Not all of them are required for the results in the following sections. 

A dynamical LHV model has \textbf{\emph{local} LHV dynamics} if the physical interaction structure is preserved: If according to the Hamiltonian~$H(t)$ at some instance in time or according to the unitary~$U$ some particle~$j$ only interacts with a subset of the other particles, then the corresponding component~$V_j(\cdot, t)$ of the velocity field at that time or the component~$(T_U)_j$ of the hidden-variable transformation, respectively, only depend on the same subset of particles. If this is the case, we speak of a local velocity field and local hidden-variable transformations, respectively.

A dynamical LHV model has \textbf{LHV dynamics \emph{with consistent gauge}} if the hidden-variable distributions are fixed uniquely for each state. One may think of this as \lq path-independence\rq, in the sense that if the same state is reached in two different ways, then the resulting hidden-variable distributions must also be the same. That is, the hidden-variable transformation for a unitary~$U$ maps the hidden-variable distribution for any state~$\rho\in\States$ onto that for~$U[\rho]$, i.e.~$T_U p_\rho = p_{U[\rho]}$. For the Hamiltonian case, if~$\rho_0, \rho_0'\in\States$ and~$\rho(t) = \rho'(t')$ for some times~$t,t'$, then~$T_t p_{\rho_0} = T_{t'}p_{\rho_0'}$. Note that we did \emph{not} require this in the general definitions above. There, it is only implied that~$T_U p_\rho$ is \emph{a} valid hidden-variable distribution for~$U[\rho]$ and similarly that~$T_t p_{\rho_0}$ is \emph{a} valid hidden-variable distribution for~$\rho(t)$.

It is plausible that there are degrees of freedom, independent from the quantum state, which may influence the dynamics. These could be interpreted as hidden variables for the unitary~$U$ or Hamiltonian~$H(t)$ and ignoring them would lead to stochastic time evolution of the hidden variables. This would be called \textbf{\emph{stochastic} LHV dynamics}. For a clear distinction we otherwise speak of \textbf{\emph{deterministic} LHV dynamics} and that is what we will focus on in this work. 

\begin{figure}[t]
    \centering
    \includegraphics[width=0.48\textwidth]{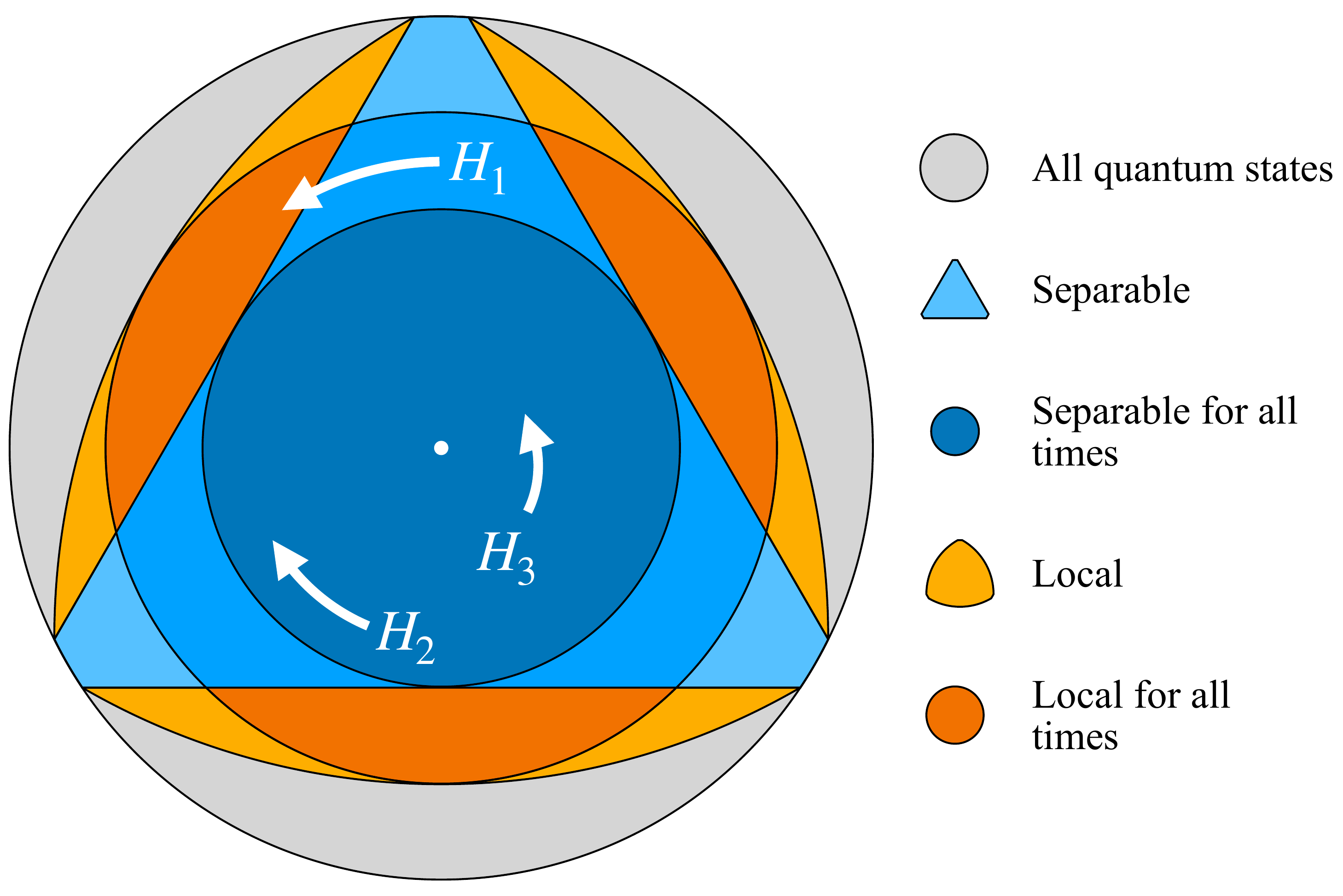}
    \caption{
    \textbf{The space of quantum states.} In this sketch, unitary evolution is represented by angular motion (indicated by the example Hamiltonians~$H_1,H_2,H_3$). Hence, the set of all states corresponds to a disc (grey), with the white dot at the center representing the maximally mixed state. All pure states lie on the boundary of this disc. The separable states (blue) are convex combinations of the pure product states. The local states (bright orange) are a convex superset of the separable states. The largest disc contained in the separable states contains all the states that remain separable under arbitrary unitary time evolution (dark blue). Likewise, the local states contain a maximal disc of states that remain local under arbitrary unitary time evolution (orange). These are the states that produce local correlations which remain local under arbitrary unitary evolution. 
    }
    \label{fig: 3}
\end{figure}

The minimal assumption on compositions of unitaries is that the measurement statistics transform correctly (see the fundamental definition above). However, for the single-Hamiltonian case we automatically have the composition law~$T_{t_2, t_0} = T_{t_2, t_1}\circ T_{t_1, t_0}$: Evolving the hidden variables with fixed interactions from time~$t_0$ to~$t_1$ and then from~$t_1$ to~$t_2$ \emph{is} the same as evolving from time~$t_0$ to~$t_2$. One may require this \lq microscopic composition law\rq\ also for general unitary evolution and indeed this will be one of the crucial assumptions for our no-go result: A dynamical LHV model has \textbf{\emph{microscopic} LHV dynamics} if the mapping~$T:G \to \text{Aut}(\Lambda),\ U \mapsto T_U$, is an action of the group~$G\subseteq \text{U}(\Hil)$ on the hidden-variable space via automorphisms (structure preserving bijections). This means, doing nothing ($U=\1$) also does nothing on the hidden-variable space,~$T_\1 = \text{Id}$, and implementing two unitaries in succession, first~$U_1$ and then~$U_2$, is the same as implementing~$U_2U_1$ directly,~$T_{U_2 U_1} = T_{U_2}\circ T_{U_1}$. 

Finally, time evolution is usually expected to be a smooth process. A dynamical LHV model has \textbf{\emph{smooth} LHV dynamics} if the transformations~$T_U$ of~$\Lambda$ are diffeomorphisms (smooth bijections with smooth inverse) and they depend on the unitaries in a smooth way (the map~$G \times \Lambda \ni (U, \lambda) \mapsto T_U(\lambda) \in \Lambda$ is smooth).

In the following sections we present results exploring in which situations dynamical LHV models do or do not exist. This, of course, depends on which combination of the properties introduced above one requires.

\section{Noninteracting Time-Evolution}

In this section we demonstrate that there exist \lq nice\rq\ dynamical LHV models for noninteracting unitaries. By \lq noninteracting\rq\ we mean products of single-particle unitaries, i.e., separable unitaries. This shows that \lq non-existence of LHV dynamics\rq\ can only be a property of interacting quantum dynamics. The result also reveals that the assumptions leading to a no-go theorem (discussed in a later section) are not hopelessly contradictory from the start.
\begin{theorem} [Nice dynamical LHV models for noninteracting unitaries]
    There exist dynamical LHV models for projective measurements~$\mathcal{M}_1$, at least all separable states~$\States$ and all noninteracting unitaries~$G = \text{U}_0(\Hil) \cong \text{U}(\Hil_1)^N$ with smooth, deterministic, microscopic and local LHV dynamics and underlying particle-number independent LHV base models. 
\end{theorem}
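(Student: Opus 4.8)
\emph{Proof sketch.} The plan is to exhibit one explicit model realising all the listed properties simultaneously, built particle by particle. For the LHV base model I would take $\Lambda_1$ to be the manifold of pure single-qudit states,
\begin{align}
    \Lambda_1 = \big\{\, \ketbra{\phi}{\phi} \ \big|\ \ket{\phi}\in\Hil_1,\ \braket{\phi}{\phi}=1 \,\big\} \cong \C\mathrm{P}^{D-1},
\end{align}
a compact smooth manifold whose definition involves neither $N$ nor any quantum state, and I would take the local measurement rule to be the Born rule itself, $q(a|x, \lambda) \coloneqq \Tr(\lambda\, M(a|x))$, which is a legitimate probabilistic response function because the projectors $M(a|x)$ of a projective measurement $x$ sum to $\1$. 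This base model $(\Lambda_1, q)$ is manifestly particle-number independent.

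Next I would fix the hidden-variable distributions. For a product state $\rho = \sigma_1 \tp \ldots \tp \sigma_N$ I would set $p_\rho = p_{\sigma_1} \tp \ldots \tp p_{\sigma_N}$, where $p_{\sigma_k}$ is any probability distribution on $\Lambda_1$ whose barycenter reproduces $\sigma_k$, i.e.~$\int_{\Lambda_1}\dup\lambda\, p_{\sigma_k}(\lambda)\,\lambda = \sigma_k$ (such a distribution always exists, e.g.~point masses on an eigenbasis of $\sigma_k$ weighted by its eigenvalues; the non-uniqueness here is precisely the gauge freedom). For a general separable state I would fix one decomposition $\rho = \sum_\alpha \mu_\alpha\, \sigma_1^\alpha \tp \ldots \tp \sigma_N^\alpha$ and put $p_\rho = \sum_\alpha \mu_\alpha\, p_{\sigma_1^\alpha} \tp \ldots \tp p_{\sigma_N^\alpha}$. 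Validity then follows from linearity in $\rho$ together with the single identity
\begin{align}
    \lan Q_m(\lambda)\ran_{\lambda \sim p_{\sigma_1} \tp \ldots \tp p_{\sigma_N}} = \prod_{k=1}^N \Tr(\sigma_k\, M(m_k)) = \Tr(\rho\, M_m),
\end{align}
which holds because averaging $q(m_k|\lambda_k) = \Tr(\lambda_k M(m_k))$ over $p_{\sigma_k}$ simply replaces $\lambda_k$ by its barycenter $\sigma_k$.

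Finally I would define the hidden-variable transformations. The group $\text{U}(\Hil_1)$ acts smoothly on $\Lambda_1$ by conjugation, $\ketbra{\phi}{\phi} \mapsto U_1 \ketbra{\phi}{\phi} U_1^\dagger$ (global phases acting trivially), so each noninteracting unitary $U = U_1 \tp \ldots \tp U_N \in \text{U}_0(\Hil)$ naturally acts component-wise,
\begin{align}
    T_U(\lambda_1, \ldots, \lambda_N) \coloneqq \big( U_1 \lambda_1 U_1^\dagger, \ \ldots, \ U_N \lambda_N U_N^\dagger \big).
\end{align}
I would then check the remaining points, all routine: $T_U \in \text{Aut}(\Lambda)$ with inverse $T_{U^{-1}}$ and $U \mapsto T_U$ is a group action, $T_\1 = \text{Id}$, $T_{UV} = T_U \circ T_V$ (\emph{microscopic}); $(U, \lambda) \mapsto T_U(\lambda)$ is smooth and each $T_U$ a diffeomorphism (\emph{smooth}); the $T_U$ are ordinary maps, not stochastic kernels (\emph{deterministic}); and the $j$-th component of $T_U$ depends only on $\lambda_j$, matching the fact that a product unitary couples no particles to one another (\emph{local}). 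The defining relation of a dynamical LHV model reduces to the observation that pushing $p_\rho$ forward by $T_U$ conjugates every single-particle barycenter by the corresponding $U_k$, so $T_U p_\rho$ has the barycenter structure of $U \rho U^\dagger$ and hence $\lan Q_m(T_U(\lambda))\ran_{\lambda \sim p_\rho} = \Tr(U \rho U^\dagger M_m) = \mathbf{P}(m|U[\rho])$; the composition requirement for sequences $U = U_K \cdots U_1$ is then automatic since $T$ is a group action. As noninteracting unitaries preserve separability, the set of all separable states of $\Hil$ is an admissible $\States$.

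There is no deep obstruction here, so the remaining work is organisational. The point that needs the most care is checking that \emph{one and the same} model carries \emph{all} the adjectives at once: in particular that using the Born rule as the response function is compatible with a genuine group action of $\text{U}_0(\Hil)$ on $\Lambda$ — which it is, since component-wise conjugation is multiplicative and since global phases, which a priori create a $\text{U}$-versus-$\text{PU}$ ambiguity, act trivially on rank-one projectors — together with the measure-theoretic bookkeeping (existence of barycentric distributions for arbitrary qudit states, their behaviour under tensor products and pushforwards, and the harmless state-by-state choice of separable decomposition). The only genuinely delicate extra demand, which this theorem deliberately does \emph{not} make, would be \emph{consistent gauge}: one could then no longer choose the $p_\rho$ independently per state but would have to propagate them coherently along the orbits of $G$ (already forcing, e.g., $p_{\1/D^N}$ to be uniform on each factor), and it is there that the real work would lie.
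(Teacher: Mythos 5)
Your construction is correct and does prove the theorem, but it realizes the decisive step with a different base model than the paper. Both arguments pivot on the same identity $\mathbf{P}(m|U[\rho]) = \mathbf{P}(U^\dagger[m]|\rho)$ for separable $U$ and reduce everything to exhibiting a local measurement rule that is \emph{covariant} under single-particle unitaries, $q(a|U_k^\dagger[x_k],\lambda_k) = q(a|x_k, T_{U_k}(\lambda_k))$. The paper realizes this covariance with the general measurement rules of App.~A (a softmax of an expansion of the response function into basis functions on the measurement space), and invokes the Peter--Weyl theorem in App.~B to show that the basis functions transform within finite-dimensional blocks, so that $T_{U_k}$ acts linearly on $\Lambda_1 = \R^{\Delta\times K}$; the payoff of that heavier machinery is tunable expressiveness, i.e.\ the same base model can also represent many local \emph{entangled} states. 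You instead take $\Lambda_1\cong \mathbb{C}\mathrm{P}^{D-1}$ with the Born rule as response function and conjugation as the action, which is more elementary, makes smoothness, locality and the group-action property essentially immediate, and represents exactly the separable states --- sufficient for the theorem as stated, which only demands \lq at least all separable states\rq. One technical point to flag: for a \emph{pure} local state $\sigma_k$ the only measure on your $\Lambda_1$ with barycenter $\sigma_k$ is a point mass, so after decomposing a separable state into pure product states your $p_\rho$ are singular measures rather than densities $\Lambda\to[0,\infty)$ as written in the paper's Definition~2; this is harmless for the argument (and your own remark about \lq measure-theoretic bookkeeping\rq\ covers it), but it should be stated explicitly or sidestepped by restricting to states with full-rank local factors. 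Your closing observation that consistent gauge is deliberately not required, and would be the genuinely hard extra demand, matches the paper's discussion.
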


\begin{proof}
    The core idea is that, on the level of measurement statistics, transforming a quantum state~$\rho\in\States$ with a unitary~$U$ is equivalent to transforming the measurement operators
    \begin{align}
        \mathbf{P}(m|U[\rho]) = \Tr(U[\rho] M_m) = \Tr(\rho U^\dagger[M_m]).
    \end{align}
    For separable unitaries~$U = \bigotimes_{k=1}^N U_k \in G$ the transformed measurement operator~$U^\dagger[M_m]$ again refers to a local measurement event, which we denote by~$U^\dagger[m] = (U_1^\dagger[m_1],\ldots, U_N^\dagger[m_N])\in\mathbf{M}$. This simply means
    \begin{align}
        \mathbf{P}(m|U[\rho]) = \mathbf{P}(U^\dagger[m]|\rho).
    \end{align}
    Therefore, in this case, deterministic LHV dynamics can be obtained by specifying transformations~$T_U$ that reverse this on the level of the hidden variables
    \begin{align}
        Q_{U^\dagger[m]}(\lambda) \stackrel{!}{=} Q_m(T_U(\lambda)).
    \end{align}
    If this can be achieved, the measurement statistics of arbitrary states (that is, of all states whose correlations can be captured by the underlying LHV base model) automatically transform correctly. Since there are no interactions between the different particles, we will now also require on the hidden-variable level that~$(T_U(\lambda))_k$ is a function of~$\lambda_k$ only (local LHV dynamics). The condition then becomes that for the local measurement rule~$q$ transforming the measurement is equivalent to transforming the hidden variable
    \begin{align}
        q(a_k|U^\dagger_k[x_k], \lambda_k) \stackrel{!}{=} q(a_k|x_k, T_{U_k}(\lambda_k)).
    \end{align}
    For projective measurements such measurement rules exist. In fact, the \lq general measurement rules\rq, introduced in \hyperref[app A: general lhv models]{App.~A} (for the purpose of showing that expressive LHV models for sets of states and continuous sets of measurements exist) have this property. See \hyperref[app B: action of local unitaries]{App.~B} for the detailed construction. It turns out that the maps~$T_{U_k}$ for~$U_k\in\text{U}(\Hil_1)$ define a smooth group action via linear bijections on~$\Lambda_1 = \R^n$. In summary, this yields smooth, deterministic, microscopic and local LHV dynamics as claimed. None of the single-particle objects in this construction explicitly depend on the total number of particles~$N$ and the general LHV models used in the construction are expected to be expressive enough to represent at least all separable states and probably also many local entangled states. Indeed, at least for qubits we have explicitly demonstrated this in~\cite{von_Selzam_Marquardt_2025}.
\end{proof}

\section{A Counterexample}

In this section, we demonstrate that having \lq LHV dynamics for an interacting Hamiltonian\rq\ is a very constraining condition. We take the perspective to fix valid hidden-variable distributions~$p_{\rho_0}(\lambda, t)$ for the time-evolved states~$\rho(t)$. Then, we can view the continuity equation as an equation for the velocity field~$V$. In general, there are infinitely many solutions. However, in a particular example we show that there does not exist any \emph{state-independent} solution for the velocity field. This demonstrates that there are LHV models with valid time-dependent hidden-variable distributions which are not compatible with LHV dynamics.

We consider the simple Bell-LHV base model for projective qubit (spin-$1/2$) measurements \cite{Bell_1964}. In this model, the surface of a sphere acts as the single-particle hidden-variable space,~$\Lambda_1 = S^2$. The outcome of a spin measurement is determined by a projection onto the measurement direction,~$q(\uparrow|\hat n,\hat\lambda) = \Theta(\hat n \cdot \hat \lambda)$. Here~$\hat n\in S^2$ describes the direction of the spin measurement and~$\Theta$ is the Heaviside step-function. We have demonstrated in~\cite{von_Selzam_Marquardt_2025} that the Bell-LHV base model is sufficiently expressive to describe all separable~$N$-qubit states and even many local entangled states, such as Werner states up to a visibility of~$v\approx 0.5$. We now consider evolution under a Heisenberg Hamiltonian for a system of two spins, ~$H=\frac{\omega }{4}\sum_{k=1}^3\sigma_k\tp\sigma_k$, where~$\sigma_k$ are the Pauli matrices.  In \hyperref[app D: bell-lhv counterexample]{App.~D} we explicitly construct hidden-variable distributions~$p_{\rho_0}(\hat\lambda_1,\hat\lambda_2, t)$ that reproduce the measurement statistics of separable two-qubit states close to the maximally mixed state,~$\States = \States_v$, for all times under evolution given by the Heisenberg interaction. These distributions depend continuously on time~$t$, the initial state~$\rho_0$ and the hidden-variables~$\hat\lambda_1,\hat\lambda_2$. In \hyperref[app D: bell-lhv counterexample]{App.~D} we then show that no state-independent velocity field solves the continuity equation for all states~$\rho_0 \in \States$, where~$\States$ can be an arbitrarily small neighborhood of the maximally mixed state. The Hamiltonian is time-independent. However, not even generalizing to a a time-dependent velocity field works. To summarize, we have a situation with a local model for each moment in time, with the associated hidden variable distributions even changing continuously in time. However, there is no dynamics on the hidden-variable level that reproduces this evolution; the constructed distributions are not compatible with LHV dynamics.

We suspect that this behavior is not due to our specific choice of hidden-variable distributions (the choice of gauge). That is, we conjecture that there exists a set of local two-qubit states that remain local under time evolution with respect to the Heisenberg Hamiltonian, such that the Bell-LHV base model can describe all the states at each moment in time, but cannot be turned into a dynamical LHV model, independent of the choice of gauge and even without choosing a consistent gauge. The example above does not already prove this conjecture because we assumed some particular evolution of the hidden-variable distributions~$p_{\rho_0}(\lambda, t)$, that is, we fixed a gauge for all times. In principle, it is possible that there exist different valid hidden-variable distributions~$p_{\rho_0}(\lambda, t)$ that \emph{are} compatible with a single state- and time-independent velocity field. 

We remark that, deviating from the Bell-LHV base model, one could formally add the full~$N$-particle quantum state to the single-particle hidden-variable space,~$\Lambda_1 \mapsto \Lambda_1 \times \States$. This is essentially what happens in Bohmian mechanics \cite{Bohm_1952_1, Bohm_1952_2} and allows to \lq lift\rq\ state-dependent velocity fields (which generically exist) to formally state-independent velocity fields on the larger hidden-variable space. In this way one would have LHV dynamics given by a state-independent velocity field for any Hamiltonian. However, these velocity fields are always all-to-all interacting, independent of the interaction structure of the Hamiltonian, and the single-particle hidden-variable space would now depend on the total number of particles, which is not desirable. Additionally, we suspect that, in general, these velocity fields may be time-dependent even for time-independent Hamiltonians.

\section{A Conjecture}

The example and discussion from the previous section lead to the following general decision problem:
\begin{question}
    Let~$H(t)$ be a Hamiltonian, let~$\mathcal{M}_1$ be a set of measurements and let~$\States$ be a set of local states that remain local for all times under evolution with respect to~$H(t)$. Does there exist a dynamical LHV model for~$H(t), \States, \mathcal{M}$ with local and deterministic LHV dynamics?
\end{question}
Our considerations make us believe that this is not always possible:
\begin{conjecture}
    There exists a set of local measurements~$\mathcal{M}$, a Hamiltonian~$H(t)$ and a set of local states~$\States\subseteq \mathcal{L}_\mathcal{M}$ that remain local for all times under evolution with respect to~$H(t)$ such that there is no dynamical LHV model for~$H(t), \States, \mathcal{M}$ with local and deterministic LHV dynamics.
\end{conjecture}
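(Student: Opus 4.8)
The plan is to derive the conjecture from a no-go theorem that rules out even local and deterministic LHV dynamics once the particle number is large. I would fix $N$ qudits for $N$ large, take $\mathcal{M}_1$ to be all projective single-particle measurements, take $\States=\States_v$ with visibility $v$ small enough that every element is separable (hence local for any $\mathcal{M}$, and stably local under all unitaries), and choose $H(t)$ to be a time-dependent Hamiltonian whose propagator $U(t,0)$ sweeps through a dense subset of $\mathrm{SU}(\Hil)$ while recurring arbitrarily close to the identity infinitely often---for instance a periodically repeated concatenation of short pulses realizing a universal gate set. Because $\States_v$ is unitarily invariant, all time-evolved states remain in $\States_v\subseteq\mathcal{L}_\mathcal{M}$, so the hypotheses of the conjecture are met, and it remains to contradict the assumed existence of a local, deterministic dynamical LHV model.

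I envisage the following steps. (i)~Pass to the Eulerian picture: the requirement is that the distributions $p_{\rho_0}(\cdot,t)$ solve the continuity equation for a single, state-independent velocity field $V(\lambda,t)$ and satisfy $\mathbf{P}(m|\rho(t))=\langle Q_m(\lambda)\rangle_{\lambda\sim p_{\rho_0}(\cdot,t)}$ for all $m$, all $\rho_0\in\States_v$ and all $t$. (ii)~Use that $\States_v$ is full-dimensional and that $\rho_0\mapsto\mathbf{P}(m|\rho(t))=\Tr(\rho_0\,\tilde M_m(t))$ is linear, with $\tilde M_m(t)=U(t,0)^\dagger M_m U(t,0)$: differentiating in $t$ and feeding in the continuity equation gives (at least formally) $\langle\nabla_\lambda Q_m\cdot V(\cdot,t)\rangle_{p_{\rho_0}(\cdot,t)}=\Tr(\rho_0\,\dot M_m(t))$, and letting $\rho_0$ run over a basis of the Hermitian operators turns this into a massively over-determined linear system tying the single field $V$ to the whole family $\{p_{\rho_0}(\cdot,t)\}$. (iii)~Invoke \emph{locality}: decompose $V(\cdot,t)=\sum_\alpha V^{(\alpha)}(\cdot,t)$ into components supported on the interaction clusters of $H(t)$, each a vector field on a fixed-size product $\Lambda_1^{O(1)}$, so $V$ carries only $O(\mathrm{poly}(N))$ functional degrees of freedom. (iv)~Show that for large $N$ and a sufficiently scrambling $H(t)$ no such low-complexity field, coupled to \emph{any} admissible gauge, can satisfy the exponentially many dynamically independent constraints of (ii)---ideally through a rigidity or dimension count (those constraints span a space of dimension exponential in $N$, unreachable by push-forwards generated by polynomially many few-body vector fields), or else by an explicit algebraic obstruction generalizing the Heisenberg counterexample discussed above and surviving marginalization over the gauge. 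A complementary route applies if $H$ is taken time-\emph{independent}: then the flow obeys $T_{t_2}=T_{t_2-t_1}\circ T_{t_1}$ automatically, so one could try to import the argument of a microscopic no-go theorem---the snag being that a one-parameter group sits inside a torus and is far from the full unitary group, so this needs supplementing.

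The principal obstacle is the gauge freedom: the $p_{\rho_0}(\cdot,t)$ are far from unique, and an a priori large velocity field might be hidden inside a cunning, state-correlated choice of gauge, so step (iv) has to exclude \emph{all} simultaneous choices rather than a single one. I expect this is exactly where large $N$ becomes indispensable: the hope is that locality bounds the complexity of $V$ while the exponential proliferation of independent quantum correlators---via step (ii)---bounds the complexity of any compatible gauge, rendering the two demands asymptotically irreconcilable, whereas for few particles the gauge is rich enough to absorb the obstruction (consistent with the counterexample above not yet settling the two-qubit case). A secondary difficulty is making the recurrence-and-density property of $U(t,0)$ quantitative enough that the first-order constraints at different times genuinely pin $V$ down, rather than being met by velocity fields whose magnitude grows without bound as $t$ increases; handling the limited regularity of $Q_m$ (e.g.\ the Heaviside-type rules of the Bell base model) in step (ii) is a further technical point.
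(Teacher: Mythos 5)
First, a framing point: the statement you are trying to prove is presented in the paper as a \emph{conjecture}, and the authors state explicitly that they have not been able to prove it. What the paper actually establishes is strictly weaker on both flanks: (a) a concrete counterexample showing that one \emph{particular} choice of time-dependent hidden-variable distributions (a fixed gauge) for the Bell-LHV base model under the two-qubit Heisenberg interaction admits no state-independent velocity field, and (b) a no-go theorem that replaces the assumption of \emph{local} LHV dynamics by \emph{microscopic} LHV dynamics (a genuine group action $U\mapsto T_U$) and works with a large group of unitaries rather than a single Hamiltonian flow. Neither result implies the conjecture, and your proposal does not close the distance either: your step (iv) is where the entire content lies, and it is stated as a hope ("ideally through a rigidity or dimension count\ldots or else by an explicit algebraic obstruction\ldots surviving marginalization over the gauge") rather than an argument. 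You correctly identify the gauge freedom as the principal obstacle, but identifying it is not the same as overcoming it; this is exactly the obstacle that stopped the authors.

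Two of the specific mechanisms you propose for step (iv) fail as stated. The dimension count does not work: locality bounds the \emph{number} of few-body components $V^{(\alpha)}$ of the velocity field by $O(\mathrm{poly}(N))$, but each component is an arbitrary vector field on $\Lambda_1^{O(1)}$ and hence carries infinitely many functional degrees of freedom, while on the other side the admissible gauges $p_{\rho_0}(\cdot,t)$ are likewise functions; so "exponentially many scalar constraints versus polynomially many unknowns" is not a well-posed count, and no finite-dimensional linear-algebra contradiction emerges. The paper itself emphasizes precisely this danger in the context of its own theorem: the space of diffeomorphisms of $\Lambda$ is infinite-dimensional, and it is only the \emph{group-action} structure of microscopic dynamics (allowing the metric-averaging reduction to the finite-dimensional isometry group $\mathrm{Iso}(\Lambda)$) that yields the dimensionality constraint $\dim G \le \dim\Lambda(\dim\Lambda+1)/2$. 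Your complementary route of importing that theorem for a time-independent $H$ fails for the reason you yourself flag: the propagators $\{e^{-iHt}\}$ form a one-parameter (abelian, one-dimensional) group, so the constraint is vacuously satisfied for any $\Lambda$ with $\dim\Lambda\ge 1$. What would be needed --- and what neither you nor the paper supplies --- is an argument that excludes \emph{every} simultaneous choice of gauge compatible with a local velocity field, for instance by extracting a gauge-invariant obstruction from the time evolution of the measurement statistics alone. Absent that, your text is a reasonable research program but not a proof of the conjecture.
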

Particularly interesting is the situation of at least three particles with a time-dependent Hamiltonian. In that case, the interaction structure becomes relevant since it can change non-trivially over time, and a potential velocity field must incorporate this correctly on the hidden-variable level (due to the assumption of local LHV dynamics). Furthermore, if the Hamiltonian is constant on some time interval, the same must hold for the velocity field. 

If the conjecture were to fail and if (furthermore) local LHV dynamics exists even with a particle-number independent LHV base model, then this would immediately allow us to efficiently simulate quantum many-body time evolution that remains within the subset of local states (the simulation time would scale linearly in the number of particles for~$k$-local interactions).

One can formulate an analogous decision problem and conjecture for the case of a group of unitaries instead of Hamiltonian time evolution.

Even though our explorations indicate that the conjecture stated above is plausible, we have not been able to construct a proof so far. However, we are able to to prove a related statement if we replace the assumption of \lq local LHV dynamics\rq\ with the assumption of \lq microscopic LHV dynamics\rq\ and focus on the general setting of a group of unitaries. This will be the subject of the next section.

\section{A No-Go Result}

In this section, we present our main result. We prove a theorem that heavily constrains dynamical LHV models for groups of unitaries. This is then reformulated as a no-go theorem excluding the possibility of LHV dynamics in many situations.
\begin{theorem}[Dimensionality constraint] 
    For the local measurements~$\mathcal{M}_1$ we consider all \emph{projective measurements}. Let~$G\subseteq \text{PU}(\Hil)$ be a closed subgroup of the (projective) unitary group and let~$\States\subseteq \mathcal{L}_\mathcal{M}$ be a subset of local states which contains~$\States_v$ for some~$v>0$. Then, any  LHV model for~$\States$, with hidden-variable space~$\Lambda$, which has smooth, deterministic and microscopic LHV dynamics for~$G$ must adhere to the following dimensionality constraint
    \begin{align}
        \dim G \le \frac{\dim \Lambda (\dim \Lambda + 1)}{2}.
    \end{align}
\end{theorem}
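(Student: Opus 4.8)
\section*{Proof proposal}

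The plan is to observe that $G$ must act on the hidden-variable manifold $\Lambda$ by isometries of a suitable Riemannian metric, and then to apply the classical dimension bound for isometry groups. Since $\Hil$ is finite-dimensional, $\mathrm{PU}(\Hil)$ is compact, so the closed subgroup $G$ is itself a compact Lie group. The hypotheses of smooth, deterministic, microscopic LHV dynamics provide a smooth action $T\colon G\to\mathrm{Diff}(\Lambda)$, $U\mapsto T_U$ (each $T_U$ a diffeomorphism, $(U,\lambda)\mapsto T_U(\lambda)$ smooth, and $T_{U_2U_1}=T_{U_2}\circ T_{U_1}$). I would fix any Riemannian metric $g_0$ on $\Lambda$ and average it over the Haar measure, $g=\int_G T_U^*g_0\,\dup U$. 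Joint smoothness makes $g$ a well-defined smooth, positive-definite metric (each pointwise term is positive on nonzero vectors because $dT_U$ is invertible), and invariance of the Haar measure together with the composition law gives $T_V^*g=\int_G (T_U\circ T_V)^*g_0\,\dup U=\int_G T_{UV}^*g_0\,\dup U=g$. Hence every $T_U$ is an isometry, i.e.\ $T$ is a homomorphism $G\to\mathrm{Isom}(\Lambda,g)$.

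Next I would show $T$ is injective. If $T_U=\mathrm{Id}_\Lambda$, then for every $\rho\in\States$ and every $m\in\mathbf{M}$ the defining relation reduces to $\mathbf{P}(m|U[\rho])=\langle Q_m(\lambda)\rangle_{\lambda\sim p_\rho}=\mathbf{P}(m|\rho)$, i.e.\ $\Tr\!\big((U\rho U^\dagger-\rho)M_m\big)=0$. The tensor-product projectors $M_m=\bigotimes_k M(m_k)$ span $\mathrm{Herm}(\Hil)$ — already rank-one projectors span $\mathrm{Herm}(\C^D)$, and tensor products of spanning families span the tensor product — so $U\rho U^\dagger=\rho$ for all $\rho\in\States$. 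Since $\States\supseteq\States_v$ contains a neighbourhood of $\1/D^N$ in $\mathbb{D}(\Hil)$, the operators $\rho-\1/D^N$ span the traceless Hermitian operators, whence $U$ commutes with all of $\mathrm{Herm}(\Hil)$ and $[U]=\mathrm{Id}$ in $\mathrm{PU}(\Hil)$. Thus $T$ has trivial kernel, and $T(G)$ is a compact, hence closed, Lie subgroup of $\mathrm{Isom}(\Lambda,g)$ with $\dim T(G)=\dim G$.

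Finally I would invoke the Myers--Steenrod theorem: $\mathrm{Isom}(\Lambda,g)$ is a Lie group, and since an isometry of a (connected) Riemannian manifold is determined by its value and its differential at a single point — data lying in a set of dimension $\dim\Lambda+\dim\mathrm{O}(\dim\Lambda)=\dim\Lambda+\binom{\dim\Lambda}{2}=\tfrac12\dim\Lambda(\dim\Lambda+1)$ — one has $\dim\mathrm{Isom}(\Lambda,g)\le\tfrac12\dim\Lambda(\dim\Lambda+1)$. Combining with $\dim G=\dim T(G)\le\dim\mathrm{Isom}(\Lambda,g)$ gives the claimed dimensionality constraint. (If $\dim\Lambda=\infty$ the bound is vacuous.)

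The step that genuinely consumes the hypotheses — and the one I expect to require the most care — is the injectivity argument: it leans both on projective-measurement tensor operators being informationally complete and on $\States$ being full-dimensional (containing $\States_v$ for some $v>0$); without the latter the kernel of $T$ could be positive-dimensional and the count would collapse (note that for the dimension bound one only needs $\ker T$ discrete, but here it is in fact trivial). A minor point worth flagging is that for non-compact $\Lambda$ the averaged metric $g$ need not be complete, which does not affect the Myers--Steenrod dimension bound; disconnectedness of $\Lambda$ is likewise harmless since the bound concerns only $\dim\mathrm{Isom}$.
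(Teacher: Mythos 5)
Your proposal is correct and follows essentially the same route as the paper's proof: establish faithfulness of the action using the informational completeness of local projective measurements together with the full-dimensionality of $\States_v$, average a metric over the Haar measure of the compact group $G$ to make the action isometric, and then apply Myers--Steenrod and the classical bound $\dim\mathrm{Iso}(\Lambda)\le\dim\Lambda(\dim\Lambda+1)/2$. Your injectivity step is somewhat more explicit than the paper's (which asserts rather than derives that distinct states yield distinct local projective statistics), but the argument is the same.
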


\begin{proof}
    Suppose there exists an LHV model,~$(\Lambda, q, \{\rho\mapsto p_\rho\}_{\rho\in\States})$, for all projective measurements for the states~$\States$. Suppose further that this LHV model has smooth, deterministic, microscopic dynamics for~$G$. This means, there exist diffeomorphisms~$T_U$ for all unitaries~$U\in G$ such that for all states~$\rho\in\States$ and all measurement scenarios~$m\in\mathbf{M}$ (corresponding to projective measurements) we have
    \begin{align}
        \mathbf{P}(m| U[\rho]) = \lan Q_m(\lambda) \ran_{\lambda \sim T_U p_\rho (\lambda)}.
    \end{align}
    Moreover, the map~$(G\times\Lambda)\ni (U, \lambda) \mapsto T_U(\lambda)\in\Lambda$ defines a smooth action of the compact Lie group~$G$ on the hidden-variable space~$\Lambda$.

    First we show that this action is \lq faithful \rq, i.e., the mapping~$G \ni U \mapsto T_U \in \text{Diff}(\Lambda)$ is injective: Let~$U_1, U_2 \in G$ be two physically distinct unitaries. Then, there is a state~$\rho\in \States_v\subseteq\States$ such that~$\rho_1 \equiv U_1[\rho] \neq U_2[\rho] \equiv \rho_2$. The measurement statistics for different states, considering all local projective measurements, are different. That is, there is a measurement event~$m\in\mathbf{M}$ such that
    \begin{align}
        \lan Q_m\ran_{T_{U_1}p_\rho} = \mathbf{P}(m|\rho_1) \neq \mathbf{P}(m|\rho_2) = \lan Q_m \ran_{T_{U_2}p_\rho}.
    \end{align}
    Therefore, the two hidden-variable transformations must be different,~$T_{U_1}\neq T_{U_2}$.
    
    Since~$G$ is a compact Lie group,~$\Lambda$ is known to admit a Riemannian metric that is invariant under the smooth action of~$G$ (see e.g.~Theorem~$2$ in~\cite{Koszul_1965}). The idea is to average any fixed metric via pull-backs under the group action over the group with respect to the Haar measure. Then, it follows from the composition law that the averaged metric is invariant under the group action.

    This means, without loss of generality~$G$ acts via isometries on~$\Lambda$. It is known that the isometries~$\text{Iso}(\Lambda)$ on the smooth manifold~$\Lambda$ form a Lie group~\cite{Myers_Steenrod_1939}. Since the action~~$G\ni U \mapsto T_U \in \text{Iso}(\Lambda)$ is faithful and smooth and because~$G$ is compact, it follows that~$G$ is isomorphic to a Lie subgroup of~$\text{Iso}(\Lambda)$. In particular,~$\dim G \le \dim \text{Iso}(\Lambda)$.

    Finally, it is known that~$\dim \text{Iso}(\Lambda) \le \dim \Lambda(\dim \Lambda+1)/2$ (see e.g.~Theorem~$3.1$ in~\cite{Kobayashi_1995}), which concludes the proof. The intuition is that, roughly speaking, an isometry is determined locally and the independent local degrees of freedom are shifts (at most $\dim \Lambda$ degrees of freedom) and orthogonal transformations (at most $\dim \Lambda (\dim \Lambda - 1)/2$ degrees of freedom). See Fig.~\ref{fig: 4} (a) for a visualization.
\end{proof}

\begin{figure}[t]
    \includegraphics[width=0.48\textwidth]{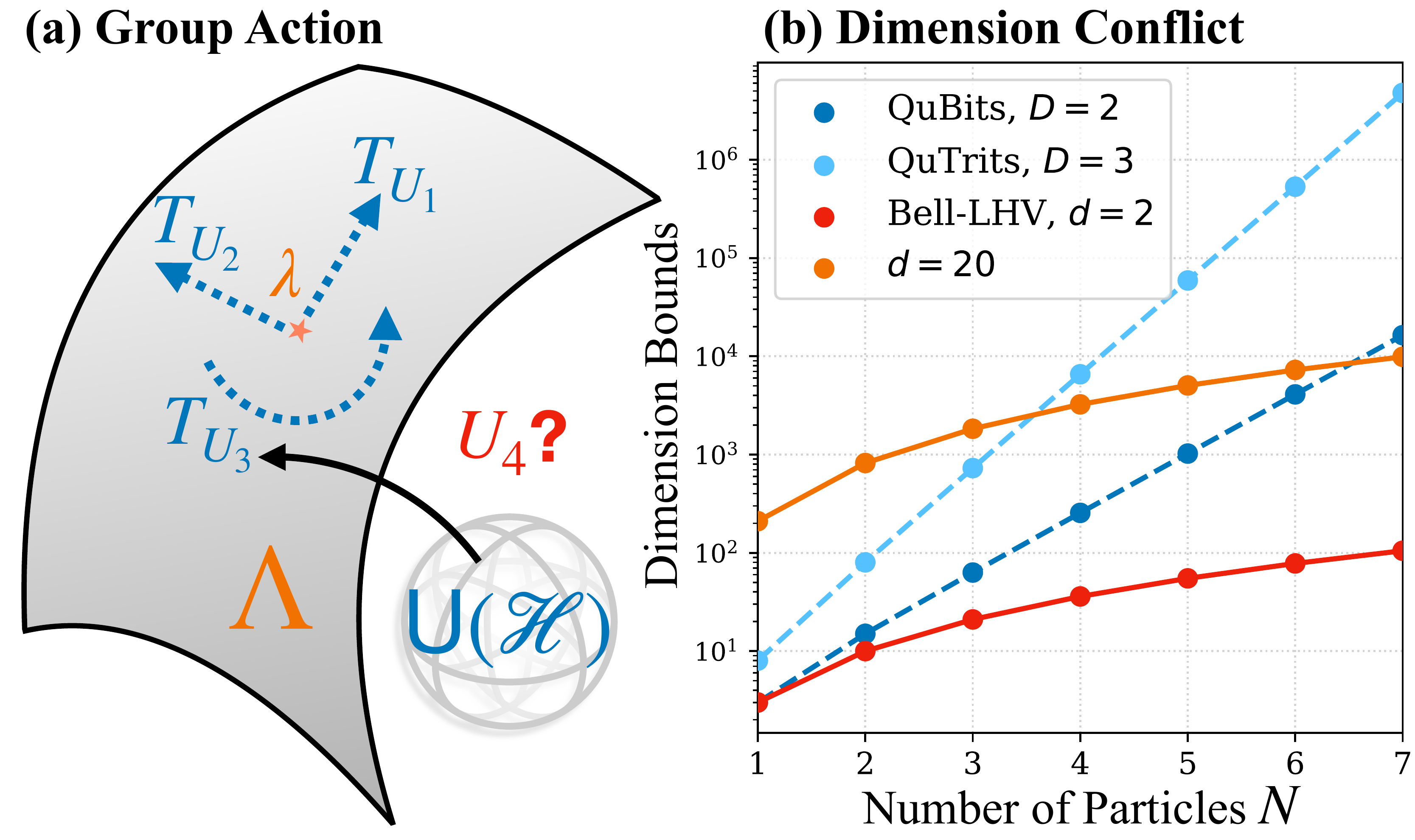}
    \caption{
    (a) The full unitary group~$\text{U}(\Hil)$ is large. For a faithful action on the hidden-variable space~$\Lambda$, sufficiently many dimensions of that space are required. In this pictorial example we imagine a four-dimensional group acting on a two-dimensional space. Locally there are only two shifts (represented by~$T_{U_1}$ and~$T_{U_2}$) and one rotation (represented by~$T_{U_3}$), that is, there are three independent available transformations (we only consider isometries, i.e., no \lq stretching\rq\ or \lq shearing\rq, see the proof of the dimensionality constraint). Hence, the four-dimensional group cannot act faithfully, \lq there is no fourth independent transformation which~$U_4$ could map to\rq.
    (b) Hidden-variable dynamics in the sense of an action of the unitary group on the hidden-variable space requires~$B_{LHV}\ge B_{QM}$, where~$B_{QM} = D^{2N}-1$ for the unitary group of~$D$-dimensional qudits (blue) and~$B_{LHV} = Nd(Nd+1)/2$ for a~$d$-dimensional single-particle hidden-variable space~$\Lambda_1$ (red).  We observe that the Bell-LHV base model can only accommodate dynamics for a single qubit. A much more expressive LHV with~$d=20$ could allow for dynamics for up to~$6$ qubits or~$3$ qutrits, although there is no guarantee that such dynamics actually exists even in these cases.}
    \label{fig: 4}
\end{figure}

Conversely, we can formulate this as a no-go theorem excluding LHV dynamics for the full unitary group. 
\begin{corollary}[No LHV dynamics for the unitary group]
      Fix any LHV base model~$(\Lambda_1, q)$ for all projective measurements. Then, there exists a critical number of particles~$N_0\in\N$ such that for all~$N\ge N_0$ the LHV base model does not have smooth, deterministic, microscopic LHV dynamics for the full unitary group and states~$\States \supseteq\States_{v}$ with $v>0$.
\end{corollary}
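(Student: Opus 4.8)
The plan is to obtain the corollary as a direct, essentially arithmetic, consequence of the dimensionality constraint just proved. Fix the LHV base model $(\Lambda_1, q)$ and set $d = \dim \Lambda_1$; by the standing assumption this number is fixed and in particular does not depend on the number of particles. For an $N$-qudit system we have $\Hil = (\C^D)^{\tp N} \cong \C^{D^N}$, the natural candidate group is the full projective unitary group $G = \text{PU}(\Hil)$, which is compact and hence a closed subgroup of itself, and the $N$-particle hidden-variable space is $\Lambda = \Lambda_1^N$, so that $\dim \Lambda = Nd$.

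First I would argue by contradiction: suppose the base model had smooth, deterministic, microscopic LHV dynamics for $\text{PU}(\Hil)$ together with a set of states $\States \supseteq \States_v$ for some $v>0$. For small enough $v$ the ball $\States_v$ has non-empty interior and consists of separable — hence local — states (as recalled in the main text, cf.~\cite{Rungta_et_al_2001}), so the hypotheses of the dimensionality-constraint theorem are satisfied with this $G$ and $\States$. Applying that theorem gives $\dim \text{PU}(\Hil) \le \dim\Lambda(\dim\Lambda+1)/2$. Substituting $\dim \text{PU}(\Hil) = (D^N)^2 - 1 = D^{2N} - 1$ and $\dim \Lambda = Nd$ yields
\begin{align}
    D^{2N} - 1 \le \frac{Nd(Nd+1)}{2}.
\end{align}
This is exactly the relation $B_{QM} \le B_{LHV}$ visualized in Fig.~\ref{fig: 4}(b).

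The remaining step is the elementary observation that for $D \ge 2$ the left-hand side grows exponentially in $N$ while the right-hand side is a fixed quadratic polynomial in $N$. Hence the inequality can hold for at most finitely many $N$, and one can take $N_0$ to be the smallest integer such that $D^{2N} - 1 > Nd(Nd+1)/2$ for all $N \ge N_0$; such an $N_0$ exists because the gap $D^{2N} - 1 - Nd(Nd+1)/2$ is eventually strictly increasing (its consecutive differences are $D^{2N}(D^2-1)$ minus a linear term, which diverges) and eventually positive. For every $N \ge N_0$ the dimensionality constraint is then violated, contradicting the assumed existence of LHV dynamics; this proves the corollary, and a concrete $N_0$ in terms of $D$ and $d$ can be read off by solving $D^{2N} - 1 = Nd(Nd+1)/2$.

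I do not anticipate any genuine obstacle here, since the corollary is a counting statement once the dimensionality constraint is available. The only points needing mild care are (i) confirming that $\text{PU}(\Hil)$ is a legitimate choice of closed subgroup and has dimension $D^{2N}-1$, and that $\States_v$ with $v$ small indeed meets the theorem's hypotheses, and (ii) turning the asymptotic dominance of the exponential over the polynomial into the precise "for all $N \ge N_0$" statement, i.e.~checking monotonicity of the gap rather than merely its limit. Both are routine.
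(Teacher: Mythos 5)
Your proposal is correct and follows essentially the same route as the paper: substitute $\dim\text{PU}(\Hil)=D^{2N}-1$ and $\dim\Lambda=Nd$ into the dimensionality constraint and observe that the exponential left-hand side eventually exceeds the quadratic right-hand side. The extra care you take with the hypotheses of the theorem and with the monotonicity of the gap only makes explicit what the paper leaves implicit.
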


\begin{proof}
    The reason is that for the full (projective) unitary group,~$G = \text{PU}(\Hil)$, the left hand side of the dimensionality constraint from the previous theorem scales exponentially in~$N$ while the right hand side only scales quadratically. Specifically for~$N$ qudits of dimension~$D$, we have~$\dim \text{PU}(\Hil) = D^{2N}-1$ and~$\dim\Lambda = \dim \Lambda_1^N = Nd$ for some fixed single-particle hidden-variable dimension~$d = \dim \Lambda_1 \in\N$. The dimensionality constraint becomes
    \begin{align}
        D^{2N}-1 \le \frac{Nd (Nd+1)}{2},
    \end{align}
    which eventually becomes a contradiction for sufficiently large~$N$.
\end{proof}

In other words, the full unitary group is too big to faithfully act on a hidden-variable space~$\Lambda=\Lambda_1^N$ whose overall dimension only grows linearly with the number of particles. LHV dynamics under the conditions stated in the theorem above, even restricted to states arbitrarily close to the maximally mixed state, is not possible for arbitrary particle numbers. Clearly, this no-go result extends to subgroups of the full unitary group, as long as their dimension grows strictly faster than quadratically in the number of particles. Conversely, there is no contradiction with our result for separable unitaries since the dimension of this group only scales linearly in the number of particles,~$\dim \text{PU}_0(\Hil) = N(D^2-1)$.

It is important to understand that, from a naive perspective, the no-go result is \emph{not} obvious. After all, the space of all diffeomorphisms on the hidden variable space $\Lambda$ is infinite-dimensional. Therefore, a priori it is not clear why one would not be able to map the finite-dimensional space of~$N$-particle unitaries faithfully into that space of diffeomorphisms for any finite number of particles. It is the additional group-action structure, i.e. the assumption of microscopic LHV dynamics, that leads to this drastic constraint. On the other hand, we emphasize that \lq local LHV dynamics\rq\ was \emph{not} assumed in deriving the theorem. This means, the no-go theorem even excludes LHV dynamics with nonlocal all-to-all interactions.

The contradiction arises \lq for sufficiently large~$N$\rq. Due to the exponential scaling of the left hand side, the critical~$N$ is not large (growing approximately logarithmically in the single-particle hidden-variable dimension~$d$). To see this in practice, consider qubits,~$D=2$, and the simple Bell-LHV base model with~$d=\dim S^2 = 2$. This LHV base model is sufficiently expressive to describe at least all separable states for arbitrary~$N$. However, already for~$N=2$ it cannot accommodate hidden-variable dynamics anymore. Indeed, the minimal hidden-variable dimension required for two qubits would be~$\dim \Lambda_1 = 3$. See~Fig.~\ref{fig: 4}~(b) for a visualization of the dimensionality constraint. We emphasize that our result is only a necessary condition for LHV dynamics but by no means guaranteed to be sufficient. That is, for any fixed number of particles a sufficiently high-dimensional single-particle hidden-variable space does not guarantee the existence of LHV dynamics.

In the remainder of this section, we discuss the more technical assumptions. We assumed that the considered set of states contains a \lq full ball around the maximally mixed state\rq, i.e.~$\States \supseteq \States_v$ for some visibility~$v>0$, and we considered all local projective measurements. These two assumptions were used to show that the mapping~$U\mapsto T_U$ is injective and hence the group of unitaries~$G$ embeds into the group of isometries on the hidden-variable space~$\Lambda$. One can relax this assumption and consider arbitrary local measurements~$\mathcal{M}_1$ and arbitrary sets of local states~$\States$. In this situation, the mapping~$T: G\to \text{Iso}(\Lambda),\ U\mapsto T_U$ may have a non-trivial kernel. Specifically, if two unitaries~$U_1,U_2\in G$ lead to the same measurement statistics for~$U_1[\rho]$ and~$U_2[\rho]$ for all states~$\rho\in\States$ and measurements in~$\mathcal{M}$, then their hidden-variable transformations~$T_{U_1}$ and~$T_{U_2}$ may be identical. In this case, we only have an embedding of~$G$ modulo the kernel of~$T$ into the isometries on~$\Lambda$. The dimensionality constraint becomes~$\dim G - \dim \text{ker}(T) \le \dim \Lambda (\dim \Lambda + 1)/2$. The no-go result carries over as long as~$\dim G - \dim \text{ker}(T)$ grows strictly faster than quadratically in the number particles~$N$. 

Finally, we assumed that the hidden-variable space~$\Lambda$ is a smooth manifold and that the LHV dynamics is smooth, that is,~$U\mapsto T_U$ defines a smooth group action. This can likely be relaxed to some minimal degree of differentiability. However, only requiring continuity is not sufficient. For instance, the \lq metric-averaging trick\rq, used in the proof to reduce from general diffeomorphisms to isometries for the hidden-variable transformations, explicitly requires a differentiable structure.

\section{Related Work}

There is an immense existing body of work on Bell nonlocality. Famously, John von Neumann ruled out hidden-variable models, even independently of locality, but in hindsight under fairly strong assumptions \cite{von_Neumann_1932}. Bell showed that quantum mechanics is incompatible with local realism \cite{Bell_1964}. Later, Gisin and Popescu demonstrated that all entangled pure states are nonlocal \cite{Gisin_1991, Gisin_1992, Popescu_1992} while Werner showed that there are local entangled mixed states \cite{Werner_1989}. All of these works deal with static correlations, dynamics is not considered at all.

However, there are also several notions of nonlocality in the context of quantum time evolution. There are \emph{temporal Bell inequalities}, rooted in the work by Leggett and Garg \cite{Leggett_Garg_1985}. Here, instead of measuring different observables on multiple subsystems a fixed observable is measured on a single system at different times. They show that quantum mechanics is incompatible with the assumptions of macroscopic realism and noninvasive measurements. Brukner et al.~introduced the notion of \emph{entanglement in time} \cite{Brukner_2004}. Here, different observables, again on a single system, can be measured at several fixed points in time. Temporal Bell inequalities are derived under the assumptions of realism and locality in time. Their violation is often referred to as \emph{temporal nonlocality}. Related to this, also \emph{no-signaling in time} is violated by quantum mechanics \cite{Kofler_Brukner_2013}. A different concept is \emph{dynamical nonlocality} introduced by Popescu and based on the Aharonov-Bohm effect \cite{Aharonov_1959}. Here, the observation is that the Heisenberg equations of motion for certain observables are nonlocal in space, there is no relation to discussions of hidden variable models. \emph{Nonclassicality of temporal correlations} is defined by Brierley et al.~as correlations of an~$m$-level system for measurements at different times that require more than~$\log_2(m)$ bits of communication in a classical simulation \cite{Brierley_2015}. \emph{Entangled histories} are a concept in the quantum history states formalism by Cotler and Wilczek \cite{Cotler_Wilczek_2016} based on the work by Griffiths on consistent histories \cite{Griffiths_1984}. They appear when trying to assign a consistent past evolution based on present measurements of a quantum system. The perspectives article \emph{Spooky Action at a Temporal Distance} \cite{Adlam_2018} considers many different approaches. The overarching theme is similar to the temporal Bell inequalities in that correlations for measurements at different times are considered. 

The approach discussed in the present article is different in a fundamental way. First of all, none of the approaches above consider time evolution of hidden-variables in the context of LHV models. Second, most of them focus on measurements, in particular measurements at different times. In contrast, our approach considers the time evolution of measurement statistics. There are no measurements taken at different times. Of course, to track the measurement statistics over time experimentally one has to measure, but only at the final time~$t$. There are no intermediate measurements that (may or may not) influence the subsequent dynamics. The different notions mentioned above are indeed physically inequivalent to the statement that \lq LHV dynamics does not exist\rq: Temporal Bell inequalities can be violated by single-particle systems. Likewise, single-particle interference displays dynamical nonlocality and also the entangled histories appear for single-particle systems. In contrast, we have demonstrated that LHV dynamics exists for noninteracting systems, in particular for single particles. Only the evolution of multi-particle measurement statistics under interacting time evolution can be incompatible with LHV dynamics and we have demonstrated that this indeed happens.

Finally, there is Bohmian mechanics and similar nonlocal hidden-variable theories \cite{Bohm_1952_1, Bohm_1952_2}. In fact, in our language Bohmian mechanics is an LHV model for position measurements. Its nonlocal character is only visible in the dynamics. The wave function can be interpreted as a shared hidden variable. Its time evolution is then nonlocal in the sense that the interaction structure is not respected. Formally, such a global shared hidden variable always corresponds to all-to-all interactions. Note, that our no-go result does not assume anything about the interaction structure, though. However, there is no contradiction since Bohmian mechanics deals with infinite-dimensional systems. Additionally, other assumptions are violated. For example, having the full wave function as a hidden-variable corresponds to a particle-number dependent single-particle hidden-variable space.

\section{Conclusion}

We have raised and investigated the question whether the quantum time evolution of local correlations that remain local for all times can be captured by evolving the hidden variables of an underlying LHV model. While Bell's theorem establishes that there are quantum states which produce nonlocal correlations, our work addresses the fundamentally different question of whether a local description can account for the quantum dynamics, assuming the instantaneous correlations are local. 

For this purpose, we introduced LHV models for sets of states and defined LHV dynamics as time evolution of the hidden variables such that the resulting time evolution of the measurement statistics matches quantum mechanics. We motivated a range of additional physical and mathematical constraints that one may impose on such dynamics.

We demonstrated that for noninteracting quantum dynamics there are indeed \lq nice\rq\ dynamical LHV models, with smooth, deterministic, microscopic and local LHV dynamics. For two interacting qubits we saw that the Bell-LHV base model has time-dependent hidden-variable distributions matching the quantum correlations at each moment in time. Yet, we found it is impossible even in this simple case to construct dynamics of the hidden variables that would reproduce this particular evolution. Inspired by this example, we conjectured that, more generally, the Bell-LHV base model does not have LHV dynamics for the Heisenberg interaction between two qubits. Furthermore, we raised the open question to decide in general whether local LHV dynamics exists for any particular combination of Hamiltonian, set of initial states and set of measurements. We conjectured that local LHV dynamics does \emph{not} always exist, even assuming that the instantaneous correlations remain local. In case this conjecture turns out to be false, we would obtain means of efficiently simulating noisy quantum many-body systems.

We showed that LHV models with dynamics for groups of unitaries must adhere to a dimensionality constraint. This implies a general no-go result: Any fixed LHV base model cannot have smooth, deterministic, microscopic LHV dynamics for sufficiently many particles. The maximal number of particles for which LHV dynamics cannot be excluded by this theorem only grows roughly logarithmically with the dimension of the single-particle hidden-variable space. The main technical ingredient for the proof is that we consider microscopic LHV dynamics. This means, algebraic relations between different unitaries are reflected on the level of the hidden-variable transformations and not just on the level of measurement statistics. On the other hand, the no-go result does not assume local LHV dynamics. That is, even LHV dynamics with nonlocal all-to-all interactions is excluded. One may draw an analogy to the works of von Neumann and Bell \cite{von_Neumann_1932, Bell_1964}: Von Neumann excluded hidden-variable extensions of quantum mechanics assuming that algebraic relations between observables are captured on the hidden-variable level, while he did not make any assumptions on locality. Much later, Bell showed that under the additional assumption of locality hidden-variable models for quantum mechanics can even be excluded if one requires the relations between different observables to only hold on the level of expectation values. It would be extremely interesting to figure out whether, in spirit of this analogy, LHV dynamics for the unitary group can be excluded assuming local bot not necessarily microscopic LHV dynamics.

Our results already reveal a novel property of quantum correlations, which is conceptually distinct from existing notions like Bell nonlocality, temporal nonlocality, entanglement in time or the dynamical nonlocality associated with the Aharonov-Bohm effect. Crucially, we have shown that \lq nonexistence of LHV dynamics\rq\ can, and does, only occur for multiple interacting particles. Single-particle systems and, more generally, noninteracting time evolution are compatible with local LHV dynamics. From a more practical perspective, this may indicate fundamental limitations on efficient classical simulations of interacting quantum systems, even very noisy ones. 

Beyond the questions already raised above, there is a lot of room for generalizations and extensions of this work. We restricted our analysis to deterministic LHV dynamics. It would be interesting to know whether allowing for stochastic LHV dynamics changes any of our conclusions. Also, one could consider arbitrary (noisy) quantum channels and not just unitary time evolution. Are there critical noise levels below which LHV dynamics is impossible, while it becomes possible for more noisy dynamics? Finally, one may consider time evolution of local correlations independent from quantum mechanics and ask in more general settings whether the evolution is compatible with LHV dynamics or not.

\section{Acknowledgments}

The research is part of the Munich Quantum Valley (K-$4$ and K-$8$), which is supported by the Bavarian state government with funds from the Hightech Agenda Bayern Plus.

\appendix

\section{Appendix}

\subsection{A. Universal LHV Models for Continuous Sets of Measurements}
\label{app A: general lhv models}

In this appendix, we demonstrate how to unify state-dependent hidden-variable models. Specifically, assume we have a set of different hidden-variable models, given by hidden-variable spaces~$\Lambda_\rho$, local measurement rules~$q_\rho$ and hidden-variable distributions~$p_\rho$. We will show how to transform them into a standard form with a single unified single-particle hidden-variable space~$\Lambda_1$ and measurement rule~$q$ that are independent of the state $\rho$ and particle number~$N$. Although the focus is different, the calculation is the same as in~\cite{von_Selzam_Marquardt_2025}.

We assume that there are only finitely many possible measurement outcomes,~$\#\mathcal{O}_1 = \Delta \in\N$. Then, we write~$\vec q_\rho(x, \lambda)$ for the probability vector with entries corresponding to the different outcomes. In this case,~$\vec q_\rho$ can always be written as the \lq softmax\rq\ of some~$\R^\Delta$-valued function~$\vec f_\rho$ 
\begin{align}
    \vec q_\rho(x, \lambda) = \frac{e^{\vec f_\rho(x, \lambda)}}{\norm*{e^{\vec f_\rho(x, \lambda)}}_1}.
\end{align}
We now expand the functions~$\vec f_\rho(x,\lambda)$ with respect to the observables~$x$ into~$K$ basis functions~$B_n(x)$ of~$L^2(\mathcal{M}_1, \R)$, with vector-valued coefficients depending on the hidden variable~$\lambda$ and the state~$\rho$:
\begin{align}
    \vec f_\rho(x, \lambda) = \sum_{n=1}^K \vec c^{\,(\rho)}_n(\lambda)B_n(x),\quad \vec c^{\,(\rho)}_n(\lambda) \in \R^\Delta.
\end{align}
Finally, we perform a change of variables,~$\boldsymbol{\lambda}_{jn} \equiv (c^{(\rho)}_n(\lambda))_j$, such that we obtain new, matrix-valued hidden-variables~$\boldsymbol{\lambda} \in \R^{\Delta \times K} \equiv \Lambda_1$. This, of course, modifies the hidden-variable distributions, though this is not a problem. Furthermore, we arrange the basis functions~$B_n(x)$ into a vector~$\vec B(x)\in \R^K$. Then, we obtain the state-independent measurement rule
\begin{align}
    \label{eq: general local measurement rule}
    \vec q(x, \boldsymbol{\lambda}) = \frac{e^{\boldsymbol{\lambda} \vec B(x)}}{\norm*{e^{\boldsymbol{\lambda} \vec B(x)}}_1}.
\end{align}
The only restriction for a finite-dimensional hidden-variable space is a finite cutoff~$K<\infty$.  
Increasing the cutoff leads to a more expressive model. For example, for qubits projective measurements are just spin measurements, which can be parameterized by unit vectors in~$\R^3$. A natural choice for the basis functions are the spherical harmonics. Then, considering spherical harmonics up to degree~$l=1$ already allows to represent all separable~$N$-qubit states exactly and considering larger cutoffs (e.g.~up to degree~$l=5$) allows to represent many local entangled states, for instance, essentially all local two-qubit Werner states (see our previous work~\cite{von_Selzam_Marquardt_2025}).

\section{B. Action of Local Unitaries}
\label{app B: action of local unitaries}

In this appendix, we show that the general local measurement rules~$\vec q$ defined in \hyperref[app A: general lhv models]{App.~A} satisfy (dropping the particle indices)
\begin{align}
    \vec q(U^\dagger[x],\boldsymbol{\lambda}) = \vec q(x,T_U(\boldsymbol{\lambda}))
\end{align}
for some functions~$T_U$. Indeed, for any single-particle unitary~$U\in\text{U}(\Hil_1)$ we can expand the function~$\mathcal{M}_1 \ni x \mapsto \vec B(U^\dagger[x]) \in\R^K$ into the same basis functions~$B_n$
\begin{align}
    \label{eq: define d matrix}
    \vec B(U^\dagger[x]) = \sum_{n=1}^K \vec d_n(U^\dagger)B_n(x) = \mathbf{d}_{U^\dagger} \vec B(x),
\end{align}
where~$\vec d_n(U^\dagger) \in \R^K$ and~$\mathbf{d}_{U^\dagger}\in \R^{K\times K}$. Since the measurement rule~$\vec q(x, \boldsymbol{\lambda})$ is a function of~$\boldsymbol{\lambda}\vec B(x)$ only, we can read of the (single-particle) hidden-variable transformations:
\begin{align}
    \boldsymbol{\lambda}\vec B(U^\dagger[x]) = \boldsymbol{\lambda}\mathbf{d}_{U^\dagger} \vec B(x) = T_U(\boldsymbol{\lambda}) \vec B(x)
\end{align}
with~the linear transformations~$T_U(\boldsymbol{\lambda}) \equiv \boldsymbol{\lambda}\mathbf{d}_{U^\dagger}$. This mapping~$U\mapsto T_U$ is a group action, that is, this dynamics is microscopic: For~$U=\1$ we clearly have~$d_{\1^\dagger} = \1_{K\times K}$ so~$T_\1 = \text{Id}$. And for unitaries~$U_1,U_2$ we have
\begin{align}
    \notag
    &\mathbf{d}_{U_1U_2}\vec B(x) 
    =\vec B(U_1 U_2 [x])  \\
    &= \sum_{n=1}^K \vec d_n(U_1)\left(\sum_{l=1}^K \vec d_l(U_2) B_l(x)\right)_n 
    = \mathbf{d}_{U_1}\mathbf{d}_{U_2}\vec B(x).
\end{align}
Therefore,~$\mathbf{d}_{U_1U_2} = \mathbf{d}_{U_1}\mathbf{d}_{U_2}$ and thus~$T_{U_1U_2} = T_{U_1}\circ T_{U_2}$.

There is one subtlety: We implicitly assumed that the functions~$B_n(U[x])$ can be expanded into the same~$K$ basis functions that we fixed for the measurement rule. Of course, this is always possible in an approximate sense, however, here we require exact equality. Without a cutoff,~$K=\infty$, this clearly works. For a finite cutoff~$K < \infty$ it turns out that this is also possible, at least for projective measurements. 

Before coming to the general analysis we consider qubits. In this example case, we expect the required mathematical objects to be familiar to most readers. Projective qubit measurements are parameterized by normal vectors~$\hat n \in S^2 \cong \mathcal{M}_1$ indicating the direction of the spin measurement. An orthonormal basis for~$L^2(S^2)$ is given by the (real-valued) spherical harmonics~$Y_{l, m}$. The (non-faithful) action of unitaries~$U\in \text{U}(2)$ on qubit measurements corresponds to rotations~$U[\hat n] = R_U\hat n,\ R_U \in SO(3)$. Now we have the nice property that rotating the argument of a spherical harmonic, does not change the degree~$l$, that is for any rotation~$R\in SO(3)$
\begin{align}
    Y_{l, m}\circ R \in \text{span}_\R \{Y_{l, m'}\}_{m'=-l}^l.
\end{align}
Thus, there are no problems with the cutoff as long as we include all orders~$m=-l,\ldots,l$ for degrees~$l \le l_{\text{max}}$. In this case, the matrices~$\boldsymbol{d}_U$ are related to the Wigner D-matrix for the rotation matrices~$R_U$. From this, it also follows that the mapping~$\text{U}(\Hil_1)\times \Lambda_1 \ni (U, \boldsymbol{\lambda}) \mapsto T_U(\boldsymbol{\lambda}) \in \Lambda_1$ is smooth.

We can generalize all of this to qudits of arbitrary dimension~$D$. Indeed the space of (single-particle) projective measurements is isomorphic to the unitary group~$\text{U}(\Hil_1)=\text{U}(D)$ modulo local phases~$\text{U}(1)^D$. The reason is that a projective measurement is specified by the projections onto an orthonormal basis. A fixed reference basis can be transformed into \emph{any} basis via some unitary and unitaries that differ not only by a diagonal unitary lead to different bases. Therefore, for projective measurements the relevant function space is~$L^2(\mathcal{M}_1, \mu)$, where~$\mathcal{M}_1 = \text{U}(\Hil_1)/\text{U}(1)^D$ (for qubits~$\text{U}(2)/\text{U}(1)^2 \cong S^2$) and~$\mu$ is the~$\text{U}(\Hil)$-invariant probability measure on~$\mathcal{M}_1$ (derived from the Haar measure). In this picture, the action of unitaries on measurements is given by matrix multiplication
\begin{align}
    U \text{U}(1)^D = x \in \mathcal{M}_1 \ \Rightarrow U'[x] = (U'U)\text{U}(1)^D.
\end{align}
The \emph{Peter-Weyl Theorem for homogeneous spaces of compact groups} \cite{Farashahi_2017} implies that there exists an orthonormal basis of~$L^2(\mathcal{M}_1, \mu)$ of the form (for qubits these are exactly the spherical harmonics)
\begin{align}
    \bigsqcup_{n\in\N} \{B_{n, j}: \mathcal{M}_1 \to \C\}_{j=1}^{J_n} \subseteq L^2(\mathcal{M}_1, \mu)
\end{align}
with the property that~$J_n < \infty $ for all~$n\in \N$ and transforming the argument with a unitary does not mix different~$n$, i.e.
\begin{align}
    \left(\mathcal{M}_1 \ni x \mapsto B_{n, j}(U[x])\right) \in \text{span}_\C \{B_{n, j'}\}_{j'=1}^{J_n}
\end{align}
for all unitaries~$U$. These basis functions are continuous and may be complex-valued. However, we can pass to real-valued basis functions with the same properties. In conclusion, as long as the cutoff~$K$ is a cutoff for~$n\in\N$ and we always include all (finitely many)~$B_{n,j}$ for~$\ j\in\{1,\ldots,J_n\}$ everything is consistent. Finally, also for this general construction the coefficient matrices~$\boldsymbol{d}_U$ depend smoothly on~$U\in \text{U}(\Hil_1)$: It follows from eq.~(\ref{eq: define d matrix}) that~$U\mapsto\boldsymbol{d}_U$ is a continuous homomorphism from the Lie group~$\text{U}(\Hil_1)$ into the Lie group of invertible~$K$-by-$K$ matrices and hence this mapping is automatically smooth.

\section{C. Bell-LHV Counterexample}
\label{app D: bell-lhv counterexample}

In this appendix, we construct hidden-variable distributions for sufficiently noisy, separable two-qubit states using the Bell-LHV base model. Then, we show that their time evolution under a Heisenberg interaction is incompatible with a state-independent velocity field on the hidden-variable space.

For a single qubit state~$\rho(\vec r)$ specified by the Bloch vector~$\vec r$ a valid choice for the hidden-variable distribution is \cite{von_Selzam_Marquardt_2025}
\begin{align}
    4\pi \, p_{\vec r}(\hat \lambda) = 4\relu(\vec r \cdot \hat \lambda) + 1 - r,
\end{align}
where~$r = \norm{\vec r}_2$ and~$\relu(x) = x\Theta(x)$. Now, consider a generic two qubit density matrix~$\rho$. It can be expanded into the Pauli matrices~$\sigma_j$
\begin{align}
    \notag
    &\rho(\vec a, \vec b, T) = \frac{1}{4}\Big( \1\tp\1 + \vec a\cdot\vec\sigma\tp\1 + \1\tp \vec b\cdot \vec \sigma \\
    &+ \sum_{j, k = 1}^3 T_{jk} \sigma_j\tp\sigma_k \Big),\ \vec a, \vec b \in \R^3,\ T\in\R^{3\times 3}.
\end{align}
We require the singular-value decomposition for the matrix~$T$ in terms of orthonormal bases~$\{\hat u_j\}_{j=1}^3$ and~$\{\hat v_j\}_{j=1}^3$ of~$\R^3$ and singular values~$S_j \ge 0$
\begin{align}
    T = \sum_{j=1}^3 S_j \hat u_j \hat v_j^{\top}.
\end{align}
Then we have the identity \cite{Ben-Aryeh_Mann_2015}
\begin{align}
    \notag
    &\rho(\vec a, \vec b, T) \\ \notag
    &=  \sum_{j=1}^3 \frac{S_j}{2}\left(\rho(\hat u_j)\tp \rho(\hat v_j) + \rho(-\hat u_j)\tp \rho(-\hat v_j)\right) \\ \notag
    &\hspace{10pt}+ a \rho(\hat a)\tp \rho(\vec 0) + b\rho(\vec 0)\tp \rho(\hat b) \\
    &\hspace{10pt}+ \left(1-a-b-\sum_{j=1}^3 S_j\right)\rho(\vec 0) \tp \rho(\vec 0).
\end{align}
Whenever $a+b+\sum_{j=1}^3 S_j \le 1$ the decomposition above yields a valid separable decomposition from which we can read of a valid hidden-variable distribution
\begin{align}
    \notag
    &(4\pi)^2\, p_{\vec a, \vec b, T}(\hat \lambda_1, \hat \lambda_2) 
    = \sum_{j=1}^3 \frac{S_j}{2} \Big(4\relu(\hat u_j\cdot \hat \lambda_1)4\relu(\hat v_j \cdot \hat \lambda_2) \\ \notag
    &\hspace{20pt}+4\relu(-\hat u_j\cdot \hat \lambda_1)4\relu(-\hat v_j \cdot \hat \lambda_2) \Big) 
    + 4a \relu(\hat a \cdot \hat \lambda_1) \\ \notag 
    &\hspace{20pt}+ 4b \relu(\hat b\cdot \hat \lambda_2) 
    + 1-a-b-\sum_{j=1}^3 S_j \\ \notag 
    &=1-a-b-\sum_{j=1}^3 S_j
    + 4\relu(\vec a \cdot \hat \lambda_1) + 4\relu(\vec b \cdot \hat \lambda_2) \\
    &\hspace{20pt}+ 8 \sum_{j=1}^3 S_j \relu((\hat u_j \cdot \hat \lambda_1) (\hat v_j \cdot \hat \lambda_2)).
\end{align}
Here, we used
\begin{align}
    \notag
    \relu(\alpha x) &= \alpha \relu(x)\ \forall x \in \R,\ \forall \alpha \ge 0,\\
    \relu(xy) &= \relu(x)\relu(y)+\relu(-x)\relu(-y)\ \forall x, y\in \R. 
\end{align}

For the Heisenberg Hamiltonian~$H=\frac{\omega}{4}\sum_{j=1}^3 \sigma_j\tp\sigma_j$ the von Neumann equation of motion at time~$t=0$ is given by
\begin{align}
    \notag
    &\dot \rho(\vec a, \vec b, T) = \frac{1}{4}\Big(\dot{\vec a} \cdot \vec \sigma\tp\1 + \1\tp \dot{\vec b}\cdot \vec \sigma + \sum_{j,k} \dot T_{jk} \sigma_j\tp\sigma_k\Big), \\
    &\dot{\vec a} = -\dot{\vec b} = \omega \vec z\hspace{-2pt}\left(\frac{T-T^{\top}}{2}\right),\ \dot T = -\omega A\hspace{-2pt}\left(\frac{\vec a - \vec b}{2}\right).
\end{align}
Here,~$\vec z(A)\in\R^3$ is the unique vector such that~$\vec z(A)\cross\vec v = A\vec v\ \forall \vec v \in \R^3$ for an anti-symmetric matrix~$A$. Vice versa,~$A(\vec z)\in\R^{3\times 3}$ is the unique anti-symmetric matrix such that~$A(\vec z)\vec v = \vec z\cross\vec v\ \forall \vec v \in \R^3$ for a vector~$\vec z$.

We obtain valid hidden-variable distributions for all times
\begin{align}
    p_{\vec a, \vec b, T}(\hat\lambda_1, \hat \lambda_2, t) = p_{\vec a(t), \vec b(t), T(t)}(\hat \lambda_1, \hat \lambda_2)
\end{align}
This is a particular choice of a consistent gauge for all times, however, for the following we only need to assume this for a neighborhood of~$t=0$. These distributions depend continuously on time~$t$ as well as on~$\vec a(0),\, \vec b(0),\, T(0)$ and~$\hat\lambda_1, \hat\lambda_2$. Now, suppose there exists a state-independent velocity field~$V(\hat \lambda_1, \hat\lambda_2, t)$ (even allowing time dependence, even though the Hamiltonian is time independent) for which these distributions solve the continuity equation. In fact, it suffices to assume that this holds at the initial time~$t=0$, i.e.
\begin{align}
    \label{eq: continuity equation at t=0}
    \partial_t\big|_{t=0} p_{\vec a(t), \vec b(t), T(t)} = -\nabla\left( p_{\vec a, \vec b, T} V \right),
\end{align}
where~$\nabla$ denotes the divergence with respect to~$(\hat \lambda_1, \hat \lambda_2)$, we have dropped the argument for notational simplicity and~$V, \vec a, \vec b, T$ correspond to their values at~$t=0$. 

Since the velocity field does not depend on the state, that is on~$\vec a, \vec b, T$, we are free to choose these parameters as long as they satisfy the condition~$a+b+\sum_{j=1}^3 S_j \le 1$ for all times. This can always be guaranteed by uniformly scaling down~$\vec a, \vec b, T$, or equivalently by considering states~$\rho_0\in\States_v$ for sufficiently small visibility~$v>0$. The strategy in the following is to consider different states and infer what this implies for the (same) velocity field~$V$. Eventually, we will arrive at contradicting properties.

First note, that there are valid initial states that do evolve in time since the Hamiltonian is non-trivial. Thus, also the distribution~$p_{\vec a(t), \vec b(t), T(t)}$ changes in time because it accounts for the change in measurement statistics. In these cases the left hand side of Eq.~(\ref{eq: continuity equation at t=0}) is non-zero. Therefore, the velocity field cannot be zero everywhere,~$V\not\equiv 0$ (also, not almost everywhere, otherwise the the measurement statistics could not change). 

Now, we also show~$V\equiv0$ yielding the contradiction. From here on consider symmetric~$T$ and~$\vec a = \vec b$. In this case~$\dot{\vec a} = \dot{\vec b} = \vec 0, \dot T = 0$ so the left hand side of Eq.~(\ref{eq: continuity equation at t=0}) vanishes and we obtain~$\nabla(p_{\vec a, \vec b, T} V) = 0$. From~$\vec a = \vec b = \vec 0,\ T= 0$ (the maximally mixed state) we infer that the velocity field must be divergence-free since~$p_{\vec a, \vec b, T} = 1/(4\pi)^2$ is a (non-zero) constant. Thus, again for general~$\vec a = \vec b,\ T=T^\top$, the condition becomes
\begin{align}
    0 = 
    V \cdot \nabla p_{\vec a, \vec b, T}.
\end{align}
Now, choose~$\vec a = \vec b = \vec 0,\ T = \pm \epsilon \hat u \hat u^\top$ for~$\epsilon >0$ arbitrarily small and any~$\hat u\in S^2$. The hidden-variable distribution reduces to
\begin{align}
    \notag
    &(4\pi)^2 p_{\pm, \epsilon, \hat u}(\hat \lambda_1, \hat\lambda_2) 
    = 1-\epsilon + 8\epsilon  \relu(\pm (\hat u\cdot \hat\lambda_1)(\hat u\cdot \hat \lambda_2)),\\ \notag
    &(4\pi)^2 \nabla_{\lambda_j} p_{\pm,\epsilon, \hat u}(\hat \lambda_1, \hat\lambda_2)  \\
    &= \pm 8\epsilon \Theta(\pm (\hat u\cdot \hat\lambda_1)(\hat u\cdot \hat \lambda_2)) (\hat u \cdot \hat \lambda_{\neg j}) \hat u,
\end{align}
where we used~$\partial_x\relu(x) = \Theta(x)$ and~$j\in \{1, 2\},\ \neg 1 = 2,\neg 2 = 1$. Writing~$V = (\vec V_1, \vec V_2)$ we can make the condition explicit
\begin{align}
    \notag
    0 &= \Theta\hspace{-2pt}\left( \pm (\hat u\cdot \hat \lambda_1)(\hat u\cdot \hat \lambda_2) \right) \\ 
    &\hspace{0pt}\times \Big[ (\hat \lambda_2 \cdot \hat u)\vec V_1(\hat \lambda_1,\hat \lambda_2) \cdot \hat u + (\hat \lambda_1 \cdot \hat u) \vec V_2(\hat \lambda_1,\hat \lambda_2)\cdot \hat u \Big].
\end{align}
Since this holds for both signs we obtain (after rescaling)
\begin{align}
    \label{eq: important for contradiction}
    0 = (\hat \lambda_2 \cdot \vec u)\vec V_1(\hat \lambda_1,\hat \lambda_2) \cdot \vec u + (\hat \lambda_1 \cdot \vec u) \vec V_2(\hat \lambda_1,\hat \lambda_2)\cdot \vec u 
\end{align}
for all~$\vec u \in \R^3$ and all~$\hat \lambda_1, \hat\lambda_2 \in S^2$. Note that the velocity is tangential to the hidden-variable space~$\vec V_j \cdot \hat \lambda_j = 0$. So, from the choice~$\vec u = \hat\lambda_j$ we obtain~$\vec V_j \cdot \hat \lambda_{\neg j} = 0$. That is, the velocity field components~$\vec V_1, \vec V_2$ are individually orthogonal to both~$\hat\lambda_1$ and~$\hat\lambda_2$. 
For~$\hat\lambda_1 \neq \pm \hat \lambda_2$ this means
\begin{align}
    \vec V_j = v_j \hat V,\quad \hat V = \frac{\hat \lambda_1\times\hat\lambda_2}{\norm*{\hat \lambda_1\times\hat\lambda_2}}, \quad \norm*{\vec V_j} = \abs{v_j},
\end{align}
where the~$v_j$ are scalar functions of~$\hat\lambda_1$ and~$\hat\lambda_2$. Next, from the choices~$\vec u = \hat \lambda_j + \vec V_j$ in Eq.~(\ref{eq: important for contradiction}) we obtain
\begin{align}
    \label{eq: important for contradiction 2}
    0 = \vec V_1\cdot \vec V_2 + (\hat \lambda_1\cdot \hat\lambda_2)(\vec V_j\cdot \vec V_j) = v_1v_2 + v_j^2 \,\hat\lambda_1\cdot \hat\lambda_2.
\end{align}
That is, for~$\hat\lambda_1\cdot\hat\lambda_2 \neq 0$ we have~$v_1^2 = v_2^2$. In other words
\begin{align}
    \vec V_1 = v\hat V,\qquad \vec V_2 = \pm v \hat V.
\end{align}
Inserting this back into Eq.~(\ref{eq: important for contradiction 2}) leads to
\begin{align}
    0 = v^2 \left(\hat \lambda_1\cdot\hat\lambda_2\pm 1\right).
\end{align}
For~$\hat\lambda_1\neq \pm \hat \lambda_2$ this implies~$\norm*{\vec V_j} = v = 0$. Putting everything together~$V\equiv0$ for all~$ \hat\lambda_1,\hat\lambda_2\in S^2 \ \text{with} \ \hat\lambda_1\cdot\hat\lambda_2 \not \in \{-1, 0, 1\}$. Hence, $V(\hat \lambda_1, \hat\lambda_2)=0$ almost everywhere and we get a contradiction as claimed.

\bibliography{bibliography}

\end{document}